\documentclass[journal,10pt]{IEEEtran}
\usepackage{amsmath}
\usepackage{bbm}
\usepackage{graphicx}
\usepackage{epsfig}
\usepackage{array}
\usepackage{psfrag}
\usepackage{amssymb}
\usepackage{mathdots}
\usepackage{color}
\usepackage{pstricks,pst-node,pst-text,pst-3d,pst-plot}
\usepackage{psfrag}
\usepackage{enumerate}
\usepackage{url,cite}
\usepackage{amsfonts}%
\usepackage{amsthm}
\usepackage{dsfont}%
\usepackage{verbatim}%
\usepackage{setspace}
\usepackage{float}
\usepackage{url}
\usepackage{fancyhdr}
\usepackage{bm}
\pdfoptionpdfminorversion=6
\usepackage{lipsum} 
\newcommand\blfootnote[1]{%
  \begingroup
  \renewcommand\thefootnote{}\footnote{#1}%
  \addtocounter{footnote}{-1}%
\endgroup
}
\usepackage{algorithm}
\usepackage{algorithmic}

\newtheorem{Def}{Definition}

\newcommand{\Dikm}{D_i(\bfm,k)}
\newcommand{\Dik}{D_i(k)}
\newcommand{\bfm}{{\bm m}}
\newcommand{\parmdef}[1]{\triangleq [#1_1(\bfm),\cdots,#1_N(\bfm)]^T}

\newcommand{\SNRkst}{\script{S}_{\rm NR}^*(k)}
\usepackage{cleveref}
\newcommand{\Bmax}{{B_{\rm max}}}
\newcommand{\figures}{figures}

\newcommand{\Paux}{P_{\rm X}}
\newcommand{\Psiaux}{\Psi_{\rm X}}
\newcommand{\sK}{\script{K}_K(l)}
\newcommand{\Bistl}{\tilde{B}_i^{(l)}}
\newcommand{\Bikst}{\tilde{B}_i(k)}
\newcommand{\Bik}{B_i(k)}
\newcommand{\aik}{a_i(k)}
\newcommand{\Lag}{L_{\rm agr}}
\newcommand{\PRopt}{\tilde{P}_i(k)}
\newcommand{\muRopt}{\tilde{\mu}_i(k)}
\newcommand{\Ropt}{\tilde{R}_i(k)}
\newcommand{\bRopt}{R^{\rm (opt)}_i}
\newcommand{\Oneopt}{\tilde{\mathds{1}}_i(k)}
\newcommand{\Psiopt}{\tilde{\Psi}(k)}

\newcommand{\Oneik}{\mathds{1}_i(k)}
\newcommand{\Ai}{A_i}

\newcommand{\pim}{\pi_{\bfm}}
\newcommand{\phimax}{\phi_{\rm max}}
\newcommand{\Rik}{R_i(k)}

\newcommand{\Qik}{Q_i(k)}

\newcommand{\Yik}{Y_i(k)}
\newcommand{\bfYk}{{\mathbf Y}(k)}
\newcommand{\bfQk}{{\mathbf Q}(k)}
\newcommand{\LW}[1]{W_0\lb #1e^{-1}\rb}
\newcommand{\gammaikm}{\gamma_i(\bfm)}
\newcommand{\gammaik}{\gamma_i(k)}
\newcommand{\lambdaNR}{\bm{\lambda}_{{\rm NR}}}

\newcommand{\NR}{N_{\rm R}}
\newcommand{\NNR}{N_{\rm NR}}

\newcommand{\sM}{\script{M}}
\newcommand{\sN}{\script{N}}
\newcommand{\Nk}{N_k}
\newcommand{\sNk}{\script{N}_k}
\newcommand{\sNR}{\script{N}_{\rm R}}

\newcommand{\sNNR}{\script{N}_{\rm NR}}

\newcommand{\sR}{\script{R}_{\rm Lamb}}
\newcommand{\sS}{\script{S}}

\newcommand{\SRT}{\script{S}_{\rm RT}}

\newcommand{\SRk}{\script{S}_{\rm R}\lb k\rb}
\newcommand{\SRstk}{\script{S}_{\rm R}^*\lb k\rb}

\newcommand{\SRkst}{\script{S}_{\rm R}^*\lb k\rb}
\newcommand{\SNRk}{\script{S}_{\rm NR}\lb k\rb}
\newcommand{\bfPRk}{\bfP\lb k\rb}

\newcommand{\PRikm}{P_i\lb \bfm,k\rb}
\newcommand{\PRik}{P_i\lb k\rb}
\newcommand{\PRikst}{P_i^*\lb k\rb}

\newcommand{\PNRik}{P_i\lb k\rb}
\newcommand{\bmuRi}{\overline{R}_i}

\newcommand{\bfmuRk}{{\bm \mu}\lb k\rb}

\newcommand{\muRikm}{\mu_i\lb \bfm,k\rb}
\newcommand{\muRik}{\mu_i\lb k\rb}

\newcommand{\muRistk}{\mu_{\ist}\lb k\rb}

\newcommand{\parPRdef}[1]{\triangleq \left[#1_i(k)\right]_{i\in\sN}}
\newcommand{\parPNRdef}[1]{\triangleq \left[#1_i(k)\right]_{i\in\sNNR}}

\newcommand{\PsiRik}{\Psi_{\rm R}(i,k)}
\newcommand{\PsiRikst}{\Psi_{\rm R}^*(i,k)}
\newcommand{\PsiNRik}{\Psi_{\rm NR}(i,k)}
\newcommand{\PsiNRikst}{\Psi_{\rm NR}^*(i,k)}
\newcommand{\PsiNRistkst}{\Psi_{\rm NR}^*(\ist,k)}

\newcommand{\ist}{i_{\rm NR}^*}

\newcommand{\bfrk}{{\bm r}\lb k\rb}

\newtheorem{thm}{Theorem}
\newtheorem{lma}{Lemma}
\DeclareMathOperator{\E}{\mathbb{E}}
\newenvironment{proofsketch}{\par{\it Proof Sketch:}}{\qed\par}

\newcommand{\lb}{\left (}
\newcommand{\rb}{\right )}

\newcommand{\script}[1]{{\mathcal {#1}}}
\newcommand{\Pavg}{P_{\rm avg}}
\newcommand{\Pmax}{P_{\rm max}}

\newcommand{\EE}[1]{\E \left[ #1 \right]}
\newcommand{\EEU}[1]{\E_{\bfU(k)} \left[ #1 \right]}

\newcommand{\bfP}{{\bf P}}

\newcommand{\bgamma}{\overline{\gamma}}

\newcommand{\bfQ}{{\bf Q}}

\newcommand{\bfY}{{\bf Y}}
\newcommand{\bfU}{{\bf U}}

\newcommand{\parRdef}[1]{\triangleq [#1_1(k),\cdots,#1_{\NR}(k)]^T}
\newcommand{\parNRdef}[1]{\triangleq [#1_1(k),\cdots,#1_{\NNR}(k)]^T}

\newcommand{\Rmax}{R_{\rm max}}
\newcommand{\gammamax}{\gamma_{\rm max}}

\newcommand{\Ts}{T}


\begin{document}
\title{Optimal Power Control and Scheduling for Real-Time and Non-Real-Time Data}

\author{Ahmed Ewaisha, Cihan Tepedelenlio\u{g}lu\\
\small{School of Electrical, Computer, and Energy Engineering, Arizona State University, USA.}\\
\small{Email:\{ewaisha, cihan\}@asu.edu}\\
}
\maketitle
\blfootnote{The work in this paper has been supported by NSF Grant ECCS-1307982.}
\blfootnote{Parts of this work have been accepted in IEEE WCNC 2017 conference \cite{Ewai1703:Power}.}
\begin{abstract}
We consider a  joint scheduling-and-power-allocation problem of a downlink cellular system. The system consists of two groups of users: real-time (RT) and non-real-time (NRT) users. Given an average power constraint on the base station, the problem is to find an algorithm that satisfies the RT hard deadline constraint and NRT queue stability constraint. We propose two sum-rate-maximizing algorithms that satisfy these constraints as well as achieving the system's capacity region. In both algorithms, the power allocation policy has a closed-form expression for the two groups of users. However, interestingly, the power policy of the RT users differ in structure from that of the NRT users. The first algorithm is optimal for the on-off channel model with a polynomial-time scheduling complexity in the number of RT users. The second, on the other hand, works for any channel fading model which is shown, through simulations, to have an average complexity that is close-to-linear. We also show the superiority of the proposed algorithms over existing approaches using extensive simulations.
\end{abstract}

\section{Introduction}
Quality-of-service-based scheduling has received much attention recently. It is shown in \cite{Lai20131689} and \cite{piro2011two} that quality-of-service-aware scheduling results in a better performance in LTE systems compared to best-effort techniques. Depending on the application, quality-of-service (QoS) metrics capture long-term throughput \cite{6848162}, short-term throughput \cite{hsiehheavy}, per-user average delay \cite{ewaisha2015joint}, average number of packets missing a specific deadline \cite{A_Theory_of_QoS}, or the average time a user waits to receive its data \cite{hou2015qoe}. Real-time audio and video applications require algorithms that take hard deadlines into consideration. This is because if a real-time packet is not transmitted on time, the corresponding user might experience intermittent connectivity of its audio or video.


The problem of scheduling for wireless systems under hard-deadline constraints has been widely studied in the literature (see, e.g., \cite{hou2011survey} and \cite{radhakrishnan2016review} for a survey).
 In \cite{A_Theory_of_QoS} the authors consider binary erasure channels and present a sufficient and necessary condition to determine if a given problem is feasible. The work is extended in three different directions. The first direction studies the problem under delayed feedback \cite{piro2011two}. The second considers general channel fading models \cite{hou2010scheduling}. 
The third studies multicast video packets that have strict deadlines and utilize network coding to improve the overall network performance \cite{Hou:2015:BDT:2823437.2823441,Adaptive_NC_Deadline}. Unlike the time-framed assumption in the previous works, the authors of \cite{kang2013performance} assume that arrivals and deadlines do not have to occur at the edges of a time frame. They present a scheduling algorithm under the on-off channel fading model and present its achievable region under general arrivals and deadline patterns but with a fixed power transmission. 
In \cite{Elastic_Inelastic} the authors study the scheduling problem in the presence of real-time and non-real-time data. Unlike real-time data, non-real-time data do not have strict deadlines but have an implicit stability constraint on the queues. Using the dual function approach, the problem was decomposed into an online algorithm that guarantees network stability and satisfies the real-time users' constraint.


Power allocation has not been considered for RT users in the literature, to the best of our knowledge. In this paper, we study a throughput maximization problem in a downlink cellular system serving RT and NRT users simultaneously. We formulate the problem as a joint scheduling-and-power-allocation problem to maximize the sum throughput of the NRT users subject to an average power constraint on the base station (BS), as well as a delivery ratio requirement constraint for each RT user. The delivery ratio constraint requires a minimum ratio of packets to be transmitted by a hard deadline, for each RT user. Perhaps the closest to our work are references \cite{Elastic_Inelastic} and \cite{Ewaisha_TVT2015}. The former does not consider power allocation, while the latter assumes that only one user can be scheduled per time slot. The contributions in this paper are as follows:
\begin{itemize}
	\item We present two scheduling-and-power-allocation algorithms. The first is for the on-off channel fading model while the second is for the continuous channel fading model.
	\item We show that both algorithms are optimal. That is, both satisfy the average power constraint, the delivery ratio requirement constraint, in addition to achieving the capacity region. However, the complexity of the first is polynomial in the number of users, while the second is shown to have an average complexity that is close-to-linear.
	\item We present closed-form expressions for the power allocation policy used by both algorithms. It is shown that the power allocation expressions for the RT and NRT users have a different structure.
	\item Through simulations, we show the complexity and throughput performances of the proposed algorithms over baseline ones.
\end{itemize}

The rest of this paper is organized as follows. In Section \ref{Model} we present the system model and the underlying assumptions. The problem is formulated in Section \ref{Problem_Formulation}. For the on-off channel model, the proposed power-allocation and scheduling algorithm as well as its optimality is presented in Section \ref{Proposed_Algorithm}. In Section \ref{Continuous_Fading} we present the optimal algorithm for the continuous channel model as well as another optimal algorithm with a lower complexity. The capacity region of the problem is presented in Section \ref{Capacity_Region}. Simulation results and comparisons with baseline approaches is presented in Section \ref{Results}. Finally, the paper is concluded in Section \ref{Conclusion}.


\section{System Model}
\label{Model}
We assume a time slotted downlink system with slot duration $\Ts$ seconds. The system has a single base station (BS) having access to a single frequency channel. The interference coming from all other neighboring BSs is assumed to be treated as noise. There are $N$ users in the system indexed by the set $\sN\triangleq\{1, \cdots,N\}$. The set of users is divided into the RT users $\sNR\triangleq\{1,\cdots,\NR\}$, and NRT users $\sNNR\triangleq\{\NR+1,\cdots, N\}$ with $\NR$ and $\NNR\triangleq N-\NR$ denoting the number of RT and NRT users, respectively.

We model the channel between the BS and the $i$th user as a fading channel with power gain $\gammaik$. The distribution and statistics of $\gammaik$ is arbitrary and need not be known to the BS nor to any of the users. In this paper, we present the problem for the on-off channel fading case in Sections \ref{Problem_Formulation} and \ref{Proposed_Algorithm} and then we generalize this to the continuous fading case in Section \ref{Continuous_Fading}. The on-off model \cite{A_Theory_of_QoS} corresponds to the well-known binary erasure channel model and models whether the channel is in outage or not. While the continuous fading model is more general and captures all independent and identically distributed channel distributions its solution, as will be seen, has a higher complexity.

For the on-off channel model, if channel $i$ is in a non-outage state during the $k$th slot then $\gammaik=1$, otherwise $\gammaik=0$. Channel gains are fixed over the whole slot and change independently in subsequent slots and are independent across users. Hence, the channel gain follows a Bernoulli process. Channels with a more general fading model will be discussed in Section \ref{Continuous_Fading}. Moreover, $\gammaik$ is known to the BS, for all $i\in\script{N}$, at the beginning of the each slot.

\subsection{Packet Arrival Model}
Let $a_i(k)\in\{0,1\}$ be the indicator of a packet arrival for user $i\in\sN$ at the beginning of the $k$th slot. $\{a_i(k)\}$ is assumed to be a Bernoulli process with rate $\lambda_i$ packets per slot and assumed to be independent across all users in the system. Packets arriving at the BS for the RT users are called real-time packets. RT packets have a strict transmission deadline. If an RT packet is not transmitted by this deadline, this packet is dropped out of the system and does not contribute towards the throughput of the user. However, RT user $i$ is satisfied if it receives, on average, more than $q_i\%$ of its total number of packets. We refer to this constraint as the QoS constraint for user $i$. Here we assume that real-time packets arriving at the beginning of the $k$th slot have their deadline at the end of this slot.

On the other hand, packets arriving to the BS for the NRT users can be transmitted at any point in time. Thus, packets for NRT user $i$ are stored, at the BS, at user $i$'s  (infinite-sized \cite{Bertsekas_Data_Networks}) buffer and served on a first-come-first-serve basis. Since the arrival rate $\lambda_i$, for NRT user $i$, might be higher than what the system can support, we define $r_i(k)$ as an admission controller for user $i$ at slot $k$. At the beginning of slot $k$, the BS sets $r_i(k)$ to $1$ if the BS decides to admit user $i$'s arrived packet to the buffer, and to $0$ otherwise. The time-average number of packets admitted to user $i$'s buffer is
\begin{equation}
\Ai\triangleq \limsup_{K\rightarrow \infty}\frac{1}{K}\sum_{k=1}^K \EE{r_i(k)}, \hspace{0.25in} i\in\sNNR.
\label{Avg_Admit}
\end{equation}
And the queue associated with NRT user $i$ is given by
\begin{equation}
Q_i(k+1)=\lb \Qik + Lr_i(k)-\muRik\Rik\rb^+, \hspace{0.05in} i\in\sNNR,
\label{Queues}
\end{equation}
where $r_i(k)$ is the admission control decision variable for NRT user $i$ at the beginning of slot $k$. We note that no admission controller is defined for the RT users since their buffers cannot build up due to the presence of a deadline.

\subsection{Service Model}
\begin{figure}%
\centering
\includegraphics[width=1\columnwidth]{\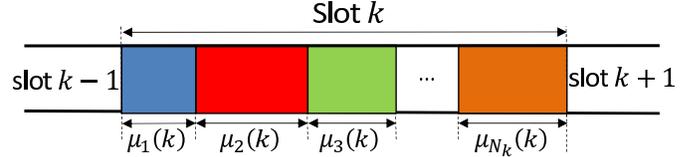}%
\caption{In the $k$th time slot, the BS chooses $\Nk$ users to be scheduled. All time slots have a fixed duration of $\Ts$ seconds.}%
\label{Time_Slot}%
\end{figure}
Following \cite{hou2010scheduling} we assume that more than one user can be scheduled in one time slot. However, due to the existence of a single frequency channel in the system, the BS transmits to the scheduled users sequentially as shown in Fig. \ref{Time_Slot}. At the beginning of the $k$th slot, the BS selects a set of RT users denoted by $\SRk\subseteq\sNR$ and a set of NRT users $\SNRk\subseteq\sNNR$ to be scheduled during slot $k$. Thus a total of $\Nk\triangleq\vert\sNk\vert$ users are scheduled at slot $k$ where $\sNk\triangleq\SRk\cup\SNRk$ (Fig. \ref{Time_Slot}). Moreover, the BS assigns an amount of power $\PRik$ for every user $i\in\sNk$. This dictates the transmission rate for each user according to the channel capacity given by
\begin{equation}
\Rik=\log \lb 1+\PRik\gammaik\rb.
\label{Rate_On_Off}
\end{equation}
Finally, the BS determines the duration of time, out of the $\Ts$ seconds, that will be allocated for each scheduled user. We define the variable $\muRik$ to represent the duration of time, in seconds, assigned for user $i\in\sN$ during the $k$th slot (Fig. \ref{Time_Slot}). Hence, $\muRik\in[0,\Ts]$ for all $i\in\sN$. The BS decides the value of $\muRik$ for each user $i\in\sN$ at the beginning of slot $k$. Since RT users have a strict deadline, then if an RT user is scheduled at slot $k$, then it should be allocated the channel for a duration of time that allows the transmission of the whole packet. Thus we have
\begin{equation}
\muRik=\left\{
\begin{array}{lll}
	\frac{L}{\Rik} &\mbox{ if }i\in\SRk\\
	0 &\mbox{ if }i\in\sNR\backslash\SRk
\end{array}
\right.,
\label{Num_Slots}
\end{equation}
where $L$ is the number of bits per packet, that is assumed to be fixed for all packets in the system. The extension to multiple packet types of different lengths will be addressed in Section \ref{General_L}. Equation \eqref{Num_Slots} means that, depending on the transmission power, if RT user $i$ is scheduled at slot $k$, then it is assigned as much time as required to transmit its $L$ bits. Hence, unlike the NRT users that have $\muRik\in[0,T]$, $\muRik$ is further restricted to the set $\{0,L/\Rik\}$ for the RT users. For ease of presentation, we denote $\bfQ(k)\parNRdef{Q}$. In the next section we present the problem formally.

\section{Problem Formulation for On-Off Channels}
\label{Problem_Formulation}
We are interested in finding the scheduling and power allocation algorithm that maximizes the sum-rate of all NRT users subject to the system constraints. In this paper we restrict our search to slot-based algorithms which, by definition, take the decisions only at the beginning of the time slots.

Now define the average rate of user $i\in\sNNR$ to be $\bmuRi\triangleq \liminf_{K\rightarrow \infty}\sum_{k=1}^K\muRik \Rik/(L\Ts K)$ packets per slot. Thus the problem is to find the scheduling, power allocation and packet admission decisions at the beginning of each slot, that solve the following problem
\begin{align}
\label{Prob_DL}
\underset{\{\bfmuRk,\bfPRk,\bfrk\}_{k=1}^\infty}{\text{maximize}} &\sum_{i\in\sNNR}\bmuRi,&\\
\text{subject to } & r_i(k)\leq a_i(k), \hspace{0.1in}\forall i\in\sNNR,
\label{Admission_Decision}\\
& \limsup_{k\rightarrow\infty}\EE{\Qik}<\infty \hspace{0.1in}\forall i\in\sNNR,
\label{NRT_QoS}\\
&\bmuRi\geq\lambda_i q_i, \hspace{0.1in} \forall i\in\sNR,
\label{RT_QoS}\\
&\limsup_{K\rightarrow \infty}\frac{1}{K\Ts}\sum_{k=1}^K \sum_{i\in\sN}\PRik \muRik\leq \Pavg,
\label{P_avg}\\
& 0\leq\PRik\leq \Pmax, \hspace{0.1in}\forall i\in\sN,
\label{P_max}\\
&\sum_{i\in\sN}\muRik= \Ts \hspace{0.1in}\forall k\geq 1,
\label{Single_Tx_at_a_Time}\\
&0\leq\muRik\leq\Ts \hspace{0.1in} \forall i\in\sN,
\label{muRik_Constr_NRT}
\end{align}
where $\bfmuRk\parPRdef{\mu}$, $\bfPRk\parPRdef{P}$ while $\bfrk\parPNRdef{r}$. Constraint \eqref{Admission_Decision} says that no packets should be admitted to the $i$th buffer if no packets arrived for user $i$. Constraint \eqref{NRT_QoS} means that the queues of the NRT users have to be stable. Constraint \eqref{RT_QoS} indicates that the resources allocated to a RT user $i$ need to be such that the fraction of packets transmitted by the deadline are greater than the required QoS $q_i$. Constraint \eqref{P_avg} is an average power constraint on the BS transmission power. Finally constraint \eqref{Single_Tx_at_a_Time} guarantees that the sum of durations of transmission of all scheduled users does not exceed the slot duration $\Ts$. In this paper, we assume that the NRT user with the longest queue has enough packets, at each slot, to fit the whole slot duration which is a valid assumption in the heavy traffic regime. It will be clear that the generalization to the non-heavy traffic regime is possible by allowing multiple NRT users to be scheduled but this is omitted for brevity.

\section{Proposed Algorithm for On-Off Channels}
\label{Proposed_Algorithm}
We use the Lyapunov optimization technique \cite{li2011delay} to find and optimal algorithm that solves \eqref{Prob_DL}. We do this on four steps: i) We define, in Section \eqref{Prob_Decouple} a ``virtual queue'' associated with each average constraint in problem \eqref{Prob_DL}. This helps in decoupling the problem across time slots. ii) In Section \ref{Motivation_DL}, we define a Lyapunov function, its drift and a, per-slot, reward function. The latter is proportional to the objective of \eqref{Prob_DL}. iii) Based on the virtual queues and the Lyapunov function, we form an optimization problem, for each slot $k$, that minimizes the drift-minus-reward expression the solution of which is the proposed power allocation and scheduling algorithm. In Section \ref{Efficient_Solution}, we propose an efficient way to solve this problem optimally. iv) Finally, we show that this minimization guarantees reaching an optimal solution for \eqref{Prob_DL}, in Section \ref{Optimality_DL_Section}.

\subsection{Problem Decoupling Across Time Slots}
\label{Prob_Decouple}
We define a virtual queue associated with each RT user as follows
\begin{equation}
Y_i(k+1)=\lb \Yik + a_i(k)q_i-\Oneik\rb^+, \hspace{0.25in} i\in\sNR,
\label{DL_VQ}
\end{equation}
where $\Oneik\triangleq\mathds{1}\lb\muRik\rb$ with $\mathds{1}(\cdot)=1$ if its argument is non-zero and $\mathds{1}(\cdot)=0$ otherwise. For notational convenience we denote ${\bfY}(k)\parRdef{Y}$. $\Yik$ is a measure of how much constraint \eqref{RT_QoS} is violated for user $i$. We will later show a sufficient condition on $Y_i(k)$ for constraint \eqref{RT_QoS} to be satisfied. Hence, we say that the virtual queue $Y_i(k)$ is associated with constraint \eqref{RT_QoS}. Similarly, we define the virtual queue $X(k)$, associated with constraint \eqref{P_avg}, as
\begin{equation}
X(k+1)=\lb X(k) + \frac{\sum_{i\in\sN}\PRik \muRik}{\Ts}-\Pavg\rb^+.
\label{P_avg_VQ}
\end{equation}
To provide a sufficient condition on the virtual queues to satisfy the corresponding constraints, we use the following definition of \emph{mean rate stability} of queues \cite[Definition 1]{li2011delay} to state the lemma that follows.

\begin{Def}
\label{Mean_Rate_Def}
A random sequence $\{Y_i(k)\}_{k=0}^\infty$ is said to be mean rate stable if and only if $\limsup_{K\rightarrow\infty}\EE{Y_i(K)}/K=0$ holds.
\end{Def}

\begin{lma}
\label{Mean_Rate_Lemma}
If, for some $i\in\sNNR$, $\{Y_i(k)\}_{k=0}^\infty$ is mean rate stable, then constraint \eqref{RT_QoS} is satisfied for user $i$.
\end{lma}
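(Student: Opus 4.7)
The plan is to exploit the non-negativity inequality $(x)^+ \geq x$ applied to the virtual-queue recursion \eqref{DL_VQ} and then telescope. Starting from \eqref{DL_VQ} and dropping the projection gives
\begin{equation*}
Y_i(k+1) \geq Y_i(k) + a_i(k)q_i - \Oneik.
\end{equation*}
Summing this inequality from $k=0$ to $K-1$ causes the $Y_i$ terms to telescope, leaving
\begin{equation*}
Y_i(K) - Y_i(0) \geq q_i \sum_{k=0}^{K-1} a_i(k) - \sum_{k=0}^{K-1}\Oneik.
\end{equation*}

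Next I would take expectations of both sides, divide by $K$, and rearrange. Because $\{a_i(k)\}$ is Bernoulli$(\lambda_i)$ and independent across slots, $\EE{a_i(k)}=\lambda_i$, so
\begin{equation*}
\frac{1}{K}\sum_{k=0}^{K-1}\EE{\Oneik} \geq \lambda_i q_i + \frac{Y_i(0)-\EE{Y_i(K)}}{K}.
\end{equation*}
Taking $\liminf_{K\to\infty}$ and invoking the mean rate stability assumption from Definition \ref{Mean_Rate_Def}, the last term vanishes, yielding
\begin{equation*}
\liminf_{K\to\infty}\frac{1}{K}\sum_{k=0}^{K-1}\EE{\Oneik} \geq \lambda_i q_i.
\end{equation*}

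To close the argument, I would relate the left-hand side to $\bmuRi$. By \eqref{Num_Slots}, a scheduled RT user is allocated exactly $\muRik = L/\Rik$ seconds so that $\muRik \Rik = L\,\Oneik$, while unscheduled RT users contribute zero. Substituting this into the definition of $\bmuRi$ shows that (up to a factor of $\Ts$ absorbed in the normalization) $\bmuRi$ is precisely the long-run average of $\Oneik$, so the displayed inequality delivers \eqref{RT_QoS}.

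The step I expect to be the main obstacle is the passage from the $\liminf$ of expected time-averages produced by the Lyapunov calculation to the sample-path $\bmuRi$ that actually appears in \eqref{RT_QoS}. The cleanest route is to interpret \eqref{RT_QoS} in the expected-time-average sense, which is standard in Lyapunov optimization; alternatively, one can argue via Fatou's lemma using the fact that $\Oneik\in\{0,1\}$ so the inner averages are uniformly bounded, which lets us move the expectation past the $\liminf$. Either way, the core chain of implications \emph{``mean rate stability'' $\Rightarrow$ $\liminf$ bound on $\EE{\Oneik}$ $\Rightarrow$ \eqref{RT_QoS}} is the structure to follow.
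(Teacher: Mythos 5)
Your telescoping argument---drop the $(\cdot)^+$ via $(x)^+\geq x$, sum, take expectations, divide by $K$, and invoke mean rate stability to kill the $\EE{Y_i(K)}/K$ term---is exactly the standard proof that the paper invokes by citing Lemma 3 of \cite{li2011delay}, so your approach matches the paper's. One caveat on your hedging remark: the Fatou route goes the wrong way, since Fatou gives $\E\left[\liminf_K X_K\right]\leq \liminf_K \EE{X_K}$ (and reverse Fatou only bounds $\limsup_K \EE{X_K}$ by $\E\left[\limsup_K X_K\right]$), neither of which lower-bounds the sample-path $\liminf$ appearing in the definition of $\bmuRi$; your primary reading of \eqref{RT_QoS} as an expected time-average, consistent with \eqref{Avg_Admit}, is the correct one and is what both the paper and \cite{li2011delay} intend.
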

\begin{proof}
Proof follows along the lines of Lemma 3 in \cite{li2011delay}.
\end{proof}
Lemma \ref{Mean_Rate_Lemma} shows that when the virtual queue $\Yik$ is mean rate stable, then constraint \eqref{RT_QoS} is satisfied for user $i\in\sNNR$. Similarly, if $\{X(k)\}_{k=0}^\infty$ is mean rate stable, then constraint \eqref{P_avg} is satisfied. Thus, our objective would be to devise an algorithm that guarantees the mean rate stability of $\Yik$ for all RT users as well as the mean rate stability for $X(k)$.

\subsection{Applying the Lyapunov Optimization}
\label{Motivation_DL}
The quadratic Lyapunov function is defined as
\begin{equation}
L_{\rm yap}\lb U(k)\rb\triangleq \frac{1}{2}\sum_{i\in\sNR}{Y_i^2(k)}+\frac{1}{2}\sum_{i\in\sNNR}{Q_i^2(k)}+\frac{1}{2}X^2(k),
\label{Lyapunov_Func}
\end{equation}
where $\bfU(k)\triangleq \lb\bfYk,\bfQk,X(k)\rb$, and the Lyapunov drift as $\Delta (k) \triangleq \E_{U(k)}[L_{k+1}\lb {\bf U}(k+1)\rb - L_{\rm yap}\lb \bfU(k)\rb]$ where $\EEU{x}\triangleq \EE{x\vert U(k)}$ is the conditional expectation of the random variable $x$ given $U(k)$. Squaring \eqref{Queues}, \eqref{DL_VQ} and \eqref{P_avg_VQ} taking the conditional expectation then summing over $i$, the drift becomes bounded by
\begin{equation}
\Delta(k)\leq C_1+\Psi(k),
\label{Drift_Bound}
\end{equation}
where
\begin{equation}
C_1\triangleq \frac{\underset{i\in\sNR}{\sum}\lb q_i^2+1\rb+\Pmax^2+\Pavg^2+\NNR\left[ L^2+\Ts^2\Rmax^2\right]}{2}
\end{equation}
and we use $\Rmax\triangleq\log\lb1+\Pmax\rb$, while
\begin{align}
\nonumber\Psi(k)\triangleq &\sum_{i\in\sNR}\EEU{\Yik\lb \lambda_i q_i-{\Oneik}\rb}\\
\nonumber+&X(k)\lb\sum_{i\in\sN}\frac{\EEU{\muRik\PRik}}{\Ts}-\Pavg\rb\\
+&\sum_{i\in\sNNR}\Qik\lb \EEU{Lr_i(k)-\muRik\Rik}\rb.
\label{Psi_k}
\end{align}
We define $\Bmax$ as an arbitrarily chosen positive control parameter that controls the performance of the algorithm. We shall discuss the tradeoff on choosing $\Bmax$ later on. Since $\EEU{Lr_i(k)}$ represents the average number of bits admitted to NRT user $i$'s buffer at slot $k$, we refer to $\Bmax \sum_{i\in\sNNR}\EEU{Lr_i(k)}$ as the ``reward term''. We subtract this term from both sides of \eqref{Drift_Bound}, then use \eqref{Psi_k} and rearrange to bound the drift-minus-reward term as
\begin{multline}
\Delta(k)-\Bmax \sum_{i\in\sNNR}\EEU{Lr_i(k)}\leq C_1-X(k)\Pavg+\\
\EEU{\sum_{i\in\sNR}\PsiRik}+\EEU{\sum_{i\in\sNNR}\PsiNRik\muRik}\\
+\EEU{\sum_{i\in\sNNR}\lb\Qik-\Bmax \rb Lr_i(k)}+\sum_{i\in\sNR}\Yik\lambda_i q_i,
\label{Drift_minus_Reward_Bound}
\end{multline}
where $\PsiRik$ and $\PsiNRik$ are given by
\begin{align}
&\PsiRik\triangleq \lb\Yik-\frac{L}{\Ts\Rik}X(k)\PRik\rb\Oneik, \hspace{0.05in} i\in\sNR,
\label{PsiRik}\\
&\PsiNRik\triangleq \Qik\Rik-\frac{X(k)\PRik}{\Ts}, \hspace{0.05in} i\in\sNNR,
\label{PsiNRik}
\end{align}
respectively, where we used \eqref{Num_Slots} in \eqref{PsiRik}. The proposed algorithm schedules the users, allocates their powers and controls the packet admission to minimize the right-hand-side of \eqref{Drift_minus_Reward_Bound} at each slot. Since the only term in right-hand-side of \eqref{Drift_minus_Reward_Bound} that is a function in $r_i(k)$ $\forall i\in\sNNR$ is the fourth term, we can decouple the admission control problem from the joint scheduling-and-power-allocation problem. Minimizing this term results in the following admission controller: set $r_i(k)=a_i(k)$ if $Q_i(k)<\Bmax$ and $0$ otherwise. Minimizing the remaining terms yields
\begin{equation}
\begin{array}{ll}
	&\underset{\bfPRk,\bfmuRk}{\text{maximize}}\sum_{i\in\SRk}\PsiRik + \sum_{i\in\sNNR}\PsiNRik\muRik\\
	&\text{subject to } \eqref{P_max}, \eqref{Single_Tx_at_a_Time} \text{ and } \eqref{muRik_Constr_NRT}.
\end{array}
\label{Max_Prob}
\end{equation}
This is a per-slot optimization problem the solution of which is an algorithm that minimizes the upper bound on the drift-minus-reward term defined in \eqref{Drift_minus_Reward_Bound}. Next we show how to solve this problem in an efficient way.
\subsection{Efficient Solution for the Per-Slot Problem}
\label{Efficient_Solution}
We first solve for the NRT variables then use its result to solve for the RT variables.
\subsubsection{NRT variables} To solve this problem optimally, we first find the optimal power-allocation-and-scheduling policy for the NRT users through the following lemma.

\begin{lma}
\label{NRT_Lemma_On_Off}
If an NRT user $i$ is scheduled to transmit any of its NRT data during the $k$th slot, then the optimum power level for this NRT with respect to (w.r.t.) problem \eqref{Max_Prob} is given by
\begin{equation}
\PNRik=\min\lb\lb\frac{\Ts Q_i(k)}{X(k)}-1\rb^+,\Pmax\rb.
\label{H2O_Pow_On_Off}
\end{equation}
Moreover, in the heavy traffic regime, the optimum NRT user to be scheduled, if any, w.r.t. problem \eqref{Max_Prob} is $\ist\triangleq\arg\max_{i\in\sNNR}\PsiNRikst$, where $\PsiNRikst$ comes by substituting \eqref{H2O_Pow_On_Off} in \eqref{PsiNRik}.
\end{lma}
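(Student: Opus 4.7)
The plan is to show the NRT subproblem decouples from the RT subproblem once the scheduling and powers of RT users are fixed, and then solve the NRT subproblem in two stages: first optimize the power of each NRT user in isolation, then use the resulting reduced form to pick the single NRT user to schedule.

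First, I would observe that the objective in \eqref{Max_Prob} splits as an RT part plus an NRT part $\sum_{i\in\sNNR}\PsiNRik\muRik$, and that the coupling constraint \eqref{Single_Tx_at_a_Time} together with \eqref{Num_Slots} leaves the NRT users with a residual time budget $T_{\rm rem}\triangleq\Ts-\sum_{i\in\SRk}L/\Rik$ to share. Importantly, the NRT power $\PRik$ enters the objective only through $\PsiNRik$, not through $\muRik$, so for any user that is actually scheduled (i.e.\ $\muRik>0$) the inner problem is to maximize $\PsiNRik$ over $\PRik\in[0,\Pmax]$ independently of $\muRik$.

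Second, I would exploit concavity: since $\Rik=\log(1+\PRik)$ (taking $\gammaik=1$ in the on-off model), $\PsiNRik=\Qik\log(1+\PRik)-X(k)\PRik/\Ts$ is strictly concave in $\PRik$. Setting $\partial\PsiNRik/\partial\PRik=0$ gives the unconstrained stationary point $\PRik=\Ts\Qik/X(k)-1$. Projecting onto the feasible interval $[0,\Pmax]$ (which is standard for concave maximization on a box constraint) yields exactly \eqref{H2O_Pow_On_Off}. The corner case $X(k)=0$ should be handled separately and results in $\PRik=\Pmax$, consistent with the formula interpreted in the limit.

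Third, plugging this optimal power back in reduces the NRT objective to a linear functional $\sum_{i\in\sNNR}\muRik\PsiNRikst$ over the simplex $\{\muRik\geq 0,\sum_{i\in\sNNR}\muRik\leq T_{\rm rem}\}$. The optimum of a linear function on a simplex is attained at a vertex, namely concentrating all available time on the single index $\ist=\arg\max_{i\in\sNNR}\PsiNRikst$ when this maximum is positive (and scheduling no NRT user otherwise). Here is where the heavy-traffic assumption enters: it guarantees that $\ist$'s backlog is large enough to consume $T_{\rm rem}$ without emptying, so the LP vertex is actually implementable, which is otherwise the only subtle point. Combining the three steps gives both claims of the lemma.
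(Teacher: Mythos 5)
Your proposal is correct and follows essentially the same route as the paper: the power claim is obtained exactly as in the paper's proof (the objective depends on $\PRik$ only through $\PsiNRik$, so the first-order condition projected onto $[0,\Pmax]$ gives \eqref{H2O_Pow_On_Off}), and your vertex-of-the-simplex argument for concentrating all residual time on $\ist$ is just a direct restatement of the paper's swap/contradiction argument, with the heavy-traffic assumption invoked for the same implementability purpose in both. Your added remarks on strict concavity and the $X(k)=0$ corner case are harmless refinements the paper leaves implicit.
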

\begin{proof}
We observe that, for any $i\in\sNNR$, the only term in \eqref{Max_Prob} that is a function in $\PRik$ is $\PsiNRik$. Differentiating \eqref{PsiNRik} w.r.t. $\PRik$ for all $i\in\sNNR$, equating the results to 0 and noting the minimum and maximum power constraints \eqref{P_max}, we get the water-filling power allocation formula \eqref{H2O_Pow_On_Off}. This completes the first part of the lemma.

To prove the second part, we substitute by \eqref{H2O_Pow_On_Off} in \eqref{PsiNRik} to get $\PsiNRikst$. We continue the proof by contradiction. Suppose that the optimal scheduled NRT set is given by $\SNRkst=\{\ist,j\}$ where $j\neq\ist$ and $\Psi_{\rm NR}^*(j,k)<\Psi_{\rm NR}^*(\ist,k)$. Thus, there exists some values $\alpha>0$ and $\beta>0$ such that the corresponding scheduler would be $\mu_{\ist}\lb k\rb=\alpha$ and $\mu_j\lb k\rb=\beta$, while $\mu_{l_k}(k)=0$ for all $l_k\notin\{\ist,j\}$. In other words, $\alpha$ seconds are assigned to $\ist$ and $\beta$ seconds assigned to $j$. However, if user $\ist$ has enough backlogged data, which happens in the heavy traffic regime, 
then we can increase its assigned duration to $\mu_{\ist}=\alpha+\beta$ and thus set $\mu_j(k)=0$, to get an increase in the objective of \eqref{Max_Prob} by $\beta \lb \Psi_{\rm NR}^*(\ist,k)-\Psi_{\rm NR}^*(j,k)\rb>0$ which contradicts with the optimality of $\SNRkst$ and completes the proof of the lemma.
\end{proof}

Lemma \ref{NRT_Lemma_On_Off} provides the optimal scheduling policy for the NRT users, at the $k$th slot, as well as the optimal power allocation w.r.t. problem \eqref{Max_Prob}. 
The lemma shows that if any of the NRT users is going to be scheduled in the $k$th slot, then only one of them is going to be scheduled. This means that the scheduling policy for the NRT users is
\begin{equation}
\muRik=\left\{
\begin{array}{lll}
	&\Ts-\sum_{i\in\SRstk}\muRik & i=\ist\\
	& 0 & \sNNR\backslash\{\ist\}
\end{array}
\right.
\label{mu_ist}
\end{equation}
which is a manipulation of \eqref{Single_Tx_at_a_Time}. Substituting \eqref{mu_ist} and $\PsiNRik$ in \eqref{Max_Prob}, the latter becomes
\begin{align}
	&\underset{\underset{\left[\muRik,\PRik\right]_{i\in\sNNR}}{\muRistk,}}{\text{maximize}}&\sum_{i\in\SRk}\PsiRik+\PsiNRistkst\muRistk
\label{Max_Prob_RT}\\
	&\nonumber\text{subject to } &\eqref{muRik_Constr_NRT}, \eqref{P_max} \text{ and }\muRistk=\Ts-\sum_{i\in\SRk}\frac{L}{\Rik},
\end{align}
which is simpler than \eqref{Max_Prob} since it is not a function in the NRT variables except $\muRistk$. Finding the optimal value of $\muRistk$ solves the NRT scheduling problem. We will first solve for $\muRik$ for all RT users then use \eqref{mu_ist} to find $\muRistk$.

\subsubsection{RT Variables} To find the scheduler of the RT users that is optimal w.r.t. problem \eqref{Max_Prob_RT}, we first solve for $\left[\PRik\right]_{i\in\sNR}$ given a fixed set $\SRk$, then we discuss the scheduling policy that solves for this set. To solve for $\left[\PRik\right]_{i\in\sNR}$, we present the following definition then present a theorem that discusses the optimum power allocation policy for the RT users.

\begin{Def}
\label{Lambert_Pow_Policy}
We define the Lambert power allocation policy for the RT users as
\begin{equation}
\PRik=\min\lb\frac{\frac{\Ts\Psi_{\rm NR}^*(\ist,k)}{X(k)}-1}{\LW{\left[\frac{\Psi_{\rm NR}^*(\ist,k)\Ts}{X(k)}-1\right]}}-1,\Pmax\rb,
\label{Lambert_Pow_On_Off}
\end{equation}
$i\in\SRk$, where $W_0(z)$ is the principle branch of the Lambert W function \cite{Lambert_W_Function} while $\PsiNRikst$ is defined in Lemma \ref{NRT_Lemma_On_Off}.
\end{Def}

\begin{thm}
\label{Pow_Alloc_Nec_Thm}
Given any set $\SRk$, if the Lambert power policy results in $\sum_{i\in\SRk}L/\log(1+\PRik)\leq\Ts$, then it is the optimum RT-users' power allocation policy given that $\SRk$ is the scheduling set at slot $k$. Otherwise, the optimum power allocation policy is given by
\begin{equation}
\PRik=\exp\lb{\frac{\underset{i\in\SRk}{\sum}L}{\Ts}}\rb-1,\hspace{0.25in} i\in\SRk.
\label{RT_NO_NRT_Pow}
\end{equation}
\end{thm}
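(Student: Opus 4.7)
The plan is to fix the set $\SRk$, eliminate $\muRistk$ from \eqref{Max_Prob_RT} using the equality $\muRistk=\Ts-\sum_{i\in\SRk}L/R_i(k)$, and reduce the problem to one purely in the RT powers $\{P_i(k)\}_{i\in\SRk}$ subject to $0\le P_i(k)\le\Pmax$ and the time-budget inequality $\sum_{i\in\SRk}L/\log(1+P_i(k))\le\Ts$ (which is just $\muRistk\ge 0$). After substituting definition \eqref{PsiRik} and collecting the constants that do not depend on the $P_i(k)$, the objective becomes
\begin{equation*}
\text{const}-\frac{LX(k)}{\Ts}\sum_{i\in\SRk}\frac{P_i(k)}{\log(1+P_i(k))}-L\PsiNRistkst\sum_{i\in\SRk}\frac{1}{\log(1+P_i(k))},
\end{equation*}
which is separable across users up to the shared time budget. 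I would then split the analysis into two cases depending on whether this budget is active.

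In the inactive-budget case I drop the inequality and optimize each user's term independently. Setting the per-user derivative to zero and applying the substitutions $u_i\triangleq 1+P_i(k)$ and $v_i\triangleq\log u_i-1$, the stationarity condition collapses to the canonical form $v_i e^{v_i}=[\Ts\PsiNRistkst/X(k)-1]e^{-1}$, whose principal-branch Lambert-W solution, after inverting the two substitutions and then applying the box ceiling $P_i(k)\le\Pmax$, is exactly the Lambert policy of Definition \ref{Lambert_Pow_Policy}. A second-derivative (or strict-quasiconvexity) check identifies this stationary point as the per-user maximizer. If the resulting powers obey $\sum_{i\in\SRk}L/\log(1+P_i(k))\le\Ts$, the relaxed constraint is automatically satisfied, so the Lambert policy solves the original constrained problem, which is the first conclusion of the theorem.

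If the Lambert policy violates the time budget, then monotonicity of $L/\log(1+P)$ in $P$ forces the inequality to be active at the constrained optimum, i.e.\ $\muRistk=0$. A key simplification then occurs: the two terms $\Ts\PsiNRistkst$ and $-L\PsiNRistkst\sum 1/\log(1+P_i(k))$ cancel identically on the boundary, so the objective reduces to minimizing $\sum_{i\in\SRk}P_i(k)/\log(1+P_i(k))$ subject to $\sum_{i\in\SRk}L/\log(1+P_i(k))=\Ts$. Attaching a Lagrange multiplier $\nu$ and differentiating yields the stationarity condition $(1+P_i(k))\log(1+P_i(k))-P_i(k)=\nu$; since the left-hand side is strictly increasing in $P_i(k)$, every RT user in $\SRk$ must receive the same power, and substituting this common value into the equality budget produces \eqref{RT_NO_NRT_Pow}.

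The main obstacle is the algebraic manipulation in the inactive-budget case, specifically recognizing the Lambert-W structure and verifying that the stationary point is a global (not merely local) per-user maximizer over $[0,\Pmax]$; the active-budget step is comparatively short once one notices that the entire $\PsiNRistkst$ contribution disappears on the boundary, reducing to a symmetric Lagrangian problem.
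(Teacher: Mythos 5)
Your proof is correct and takes essentially the same route as the paper: a Lagrangian/KKT treatment of \eqref{Max_Prob_RT} with complementary slackness on the time-budget constraint $\muRistk\geq 0$, the same Lambert-$W$ inversion of the stationarity condition yielding \eqref{Lambert_Pow_On_Off} when the budget is inactive, and equal powers satisfying $\vert\SRk\vert L/\log(1+\PRik)=\Ts$ when it is active. The only cosmetic difference is in the active-budget case, where you observe that the $\PsiNRistkst$ contributions cancel on the boundary and recover equal powers from the strict monotonicity of $(1+P)\log(1+P)-P$, whereas the paper reads the equality of powers directly off \eqref{Lambert_Pow_w_phi} using $\gammaik=1$ for all scheduled users in the on-off model.
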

\begin{proof}
We prove this theorem by applying the Lagrange optimization \cite[Ch. 5]{cvx_Boyd} technique to problem \eqref{Max_Prob_RT} then use the complementary slackness condition.

Since $\muRik\geq0$ for all $i\in\sNR$ (see \eqref{Num_Slots}), then we have the constraint $\muRistk\leq \Ts$ always holds from \eqref{mu_ist}. Thus we define the Lagrange multiplier $\phi$ to be the multiplier associated with the constraint $\muRistk\geq0$. The Lagrangian becomes
\begin{multline}
\Lag\triangleq \sum_{i\in\SRk}\PsiRik + \lb\PsiNRistkst+\phi\rb \times \\
\lb\Ts-\sum_{i\in\SRk}\frac{L}{\log\lb1+\PRik\gammaik\rb}\rb
\label{Lagrangian}
\end{multline}
Differentiating \eqref{Lagrangian} with respect to $\PRik$ and equating to $0$ gives
\begin{multline}
\log\lb 1+\PRik\gammaik\rb \frac{X(k)L}{\Ts}-\\\frac{\lb X(k)\PRik/\Ts+\PsiNRistkst+\phi\rb\gammaik}{1+\PRik\gammaik}=0.
\label{Deriv_Lagr}
\end{multline}
After some manipulations and denoting
\begin{equation}
\tilde{\phi}\triangleq\lb\PsiNRistkst+\phi\rb\Ts/X(k)
\label{phi_tilde}
\end{equation} we get
\begin{equation}
\log\lb1+\PRik\gammaik\rb=1+\frac{\tilde{\phi}\gammaik-1}{1+\PRik\gammaik}\triangleq 1+\tilde{P}.
\label{P_tilde}
\end{equation}
Thus we get $\tilde{P}e^{\tilde{P}}=\lb \tilde{\phi}\gammaik-1\rb e^{-1}$ which has two solutions in $\tilde{P}$ (see \cite{Lambert_W_Function}), one of them yields a negative value for $\PRik$. Hence, with the help of $W_0(\cdot)$, which is the inverse function of $xe^x$, we can write a unique solution for \eqref{Deriv_Lagr} as
\begin{equation}
\PRik=\frac{1}{\gammaik}\left[\frac{\tilde{\phi}\gammaik-1}{\LW{\left[\tilde{\phi}\gammaik-1\right]}}-1\right], \hspace{0.05in} i\in\SRk.
\label{Lambert_Pow_w_phi}
\end{equation}
To calculate \eqref{Lambert_Pow_w_phi}, we need to find the value of $\phi$ satisfying the complementary slackness condition $\phi\muRistk=0$. Hence we have one of the two following possibilities might yield the optimal solution: 1) setting $\phi=0$ and thus $\muRistk\geq0$, or 2) setting $\muRistk=0$ and thus $\phi\geq0$. If setting $\phi=0$ yields $\sum_{i\in\SRk}L/\log(1+\PRik)\leq\Ts$ then the Lambert power allocation policy in \eqref{Lambert_Pow_On_Off} is optimum since there exists no other non-negative value for $\phi$ that yields $\sum_{i\in\SRk}L/\log(1+\PRik)=\Ts$ while satisfying $\muRistk=0$ (to satisfy the complementary slackness). On the other hand, if setting $\phi=0$ yields $\sum_{i\in\SRk}L/\log(1+\PRik)>\Ts$, then $\phi$ cannot be $0$. Thus we have $\muRistk=0$, which means that the time slot will be allocated for RT users only. The corresponding value of $\phi$ should satisfy $\sum_{i\in\SRk}L/\log(1+\PRik)=\Ts$. From \eqref{Lambert_Pow_w_phi}, we observe that $\PRik=P_j(k)$ for all $i,j\in\SRk$ because $\gammaik=1$ for all $i\in\SRk$. Thus we have $L\vert\SRk\vert/\log(1+\PRik)=\Ts$. This yields the power allocation policy \eqref{RT_NO_NRT_Pow} and completes the proof.
\end{proof}
Theorem \ref{Pow_Alloc_Nec_Thm} gives closed-form expressions for the power function of the RT users given any scheduling set $\SRk$. To find the optimum scheduling set $\SRk$ that solves problem \eqref{Max_Prob_RT}, we present the following definition then mention a theorem that decreases the complexity of this search.
\begin{Def}
\label{Candidate_Sets}
At slot $k$, the set $\SRk$ is said to be a ``candidate'' set if and only if $\Yik\geq Y_j(k)$ for all $i\in\SRk$ and all $j\notin\SRk$. Otherwise it is called a ``non-candidate'' set.
\end{Def}
We note that the definition of candidate sets assumes that all RT users have $\gammaik=1$. If this assumption does not hold at some time slot $k$, then we eliminate the users with $\gammaik=0$ from the system for this time slot and consider only those with $\gammaik=1$.
\begin{thm}
\label{NRik_Lemma_On_Off}
The optimal RT set that solves \eqref{Max_Prob_RT} is one of the candidate sets.
\end{thm}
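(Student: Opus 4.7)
The plan is to prove Theorem~\ref{NRik_Lemma_On_Off} by a simple swap argument, exploiting the symmetry among RT users with $\gammaik=1$ exposed by Theorem~\ref{Pow_Alloc_Nec_Thm}. First I would restrict attention, at slot $k$, to the RT users with $\gammaik=1$ (the others cannot be scheduled since they would consume infinite time to transmit $L$ bits). Then the key observation is that, for any fixed scheduling set $\SRk$ of such users, Theorem~\ref{Pow_Alloc_Nec_Thm} yields a common transmit power $\PRik \equiv P(\SRk)$: in the Lambert regime \eqref{Lambert_Pow_On_Off} the formula depends only on $\PsiNRistkst$ and $X(k)$, while in the boundary regime \eqref{RT_NO_NRT_Pow} it depends only on $|\SRk|$. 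Consequently the per-user service time $L/\log(1+P(\SRk))$ and hence the leftover time $\muRistk=\Ts-|\SRk|L/\log(1+P(\SRk))$ depend on $\SRk$ only through its cardinality.

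Next I would argue by contradiction: suppose the optimal set $\SRk^{\rm opt}$ is not a candidate set. Then there exist indices $i\in\SRk^{\rm opt}$ and $j\notin\SRk^{\rm opt}$ (with $\gamma_j(k)=1$) such that $Y_j(k)>Y_i(k)$. Form the swapped set
\begin{equation}
\SRk' \triangleq \bigl(\SRk^{\rm opt}\setminus\{i\}\bigr)\cup\{j\},
\end{equation}
which has the same cardinality as $\SRk^{\rm opt}$. By the observation above, the common RT power $P(\SRk')=P(\SRk^{\rm opt})$ and the residual time $\muRistk$ are unchanged, so the $\PsiNRistkst\muRistk$ term in \eqref{Max_Prob_RT} is unaffected by the swap.

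Then I would compute the change in the RT portion of the objective. Using \eqref{PsiRik} with $\gammaik=1$ and $\Oneik=1$ on the scheduled set, every RT user except $i$ and $j$ contributes identically, so the difference reduces to
\begin{equation}
\sum_{l\in\SRk'}\Psi_{\rm R}(l,k)-\sum_{l\in\SRk^{\rm opt}}\Psi_{\rm R}(l,k)=\bigl(Y_j(k)-Y_i(k)\bigr)>0,
\end{equation}
because the $-\tfrac{L}{\Ts R}X(k)P$ terms coincide under the common power. This strictly improves \eqref{Max_Prob_RT}, contradicting the optimality of $\SRk^{\rm opt}$, and thus $\SRk^{\rm opt}$ must be a candidate set.

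The only subtle point, which I would address carefully, is confirming that the common-power property holds in both regimes of Theorem~\ref{Pow_Alloc_Nec_Thm} even after the swap, so that the residual time $\muRistk$ indeed stays in $[0,\Ts]$ and the regime (Lambert vs.\ boundary) does not flip. Since the regime is determined by whether $\sum_{l\in\SRk}L/\log(1+P(\SRk))$ exceeds $\Ts$, and this quantity depends solely on $|\SRk|$ (not on identities), the regime is invariant under the swap; thus feasibility and the objective comparison go through unchanged, completing the proof.
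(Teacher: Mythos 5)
Your proof is correct and follows essentially the same route as the paper's: a swap argument by contradiction, hinging on the observation that the RT power under Theorem~\ref{Pow_Alloc_Nec_Thm} depends on $\SRk$ only through its cardinality (since $\gammaik=1$ for all scheduled users), so replacing $i$ by $j$ with $Y_j(k)>Y_i(k)$ leaves the power, the time allocations, and the $\PsiNRistkst\muRistk$ term unchanged while strictly increasing the objective of \eqref{Max_Prob_RT}. Your explicit verification that the regime (Lambert vs.\ boundary) is invariant under a same-cardinality swap is a detail the paper leaves implicit, but it is the same argument.
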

\begin{proof}
We prove this theorem by contradiction. Suppose that $\SRkst$ is the optimal set and that it is not a candidate set. That is, $\exists i\in\SRk$ and $j\notin\SRk$ such that $\Yik<Y_j(k)$. It is easy to show that the Lambert power policy results in the fact that $\PRik$ depends on $\vert\SRk\vert$ and not on $\SRk$ for any $i\in\SRk$ and any $\SRk$. Thus, replacing user $i$ with user $j$ results in having $P_j(k)=\PRik$ which means that $X(k)P_j(k)\mu_j(k)=X(k)\PRik\muRik$ holds. But since $\Yik<Y_j(k)$, swapping the two users increases the objective function of \eqref{Max_Prob_RT} and results in a candidate set. This contradicts with the fact that $\SRk$ is optimal while being non-candidate.
\end{proof}

Theorem \ref{NRik_Lemma_On_Off} says that there will be no scheduled RT users having a value of $Y_j(k)$ smaller than any of the unscheduled RT users. This theorem suggests an algorithm to reduce the complexity of scheduling the RT users from $O\lb2^{\NR}\rb$ to $O\lb \NR\rb$. This algorithm is to list the RT users in a descending order of their $Y_i(k)$. Without loss of generality, in the remaining of this paper, we will assume that $Y_1>Y_2\cdots >Y_{\NR}$. 

We now propose Algorithm \ref{Scheduling_Alg} which is the scheduling and power allocation algorithm for problem \eqref{Prob_DL}. Algorithm \ref{Scheduling_Alg} is executed at the beginning of the $k$th slot and, without loss of generality, it assumes: 1) all RT users in the system have received a packet at the beginning of the $k$th slot, 2) all users in the system have an ``on'' channel. If, at some slot, any of these assumptions does not hold for some users, these users are eliminated from the system for this slot. That is, they will not be scheduled. In addition, we assume heavy traffic regime, thus the NRT user with the longest queue has enough data to fill the entire time slot. We define the set $\SRT$ to be the set of all candidate sets.

\begin{algorithm}
\caption{Scheduling and Power Allocation Algorithm}
\begin{algorithmic}[1]
\label{Scheduling_Alg}
\STATE Define the auxiliary functions $\Psiaux(\cdot):\SRT\rightarrow \mathbb{R}_+$ and $\Paux(\cdot,\cdot):\SRT\times\sNR\rightarrow\mathbb{R}_+$.
\STATE Initialize $\Paux(\sS,i)=0$ for all $\sS\in\SRT$ and all $i\in\sNR$.
\STATE Sort the RT users in a descending order of $\Yik$. Without loss of generality, assume that $Y_1>Y_2\cdots>Y_{\NR}$.
\STATE Find the user $\ist$ with longest queue $\Qik$ and set $\SRk$ to be an empty set.
\WHILE{$i\leq\NR$}
\STATE $\SRk=\SRk\cup\{i\}$ and set the power according to \eqref{Lambert_Pow_On_Off} $\forall i\in\SRk$.
\STATE Calculate $\muRik$ and $\muRistk$ according to \eqref{Num_Slots} and \eqref{mu_ist}, respectively.
\IF {$\muRistk<0$}
\STATE Set $\muRik=0$ for all $i\in\sNNR$ and set the power allocation for all $i\in\SRk$ according to \eqref{RT_NO_NRT_Pow} and recalculate $\muRik$ according to \eqref{Num_Slots}.
\ENDIF
\STATE Set $\Psiaux(\SRk)=\sum_{i\in\SRk} \lb Y_i(k)-X_i(k)\muRik\rb + \PsiNRistkst\muRistk$.
\STATE Set $\Paux(\SRk,i)=\PRik$, $\forall i\in\SRk$.
\STATE $i\leftarrow i+1$.
\ENDWHILE
\STATE Set the optimum scheduling set $\SRstk=\arg\max_{\SRk}\Psiaux(\SRk)$.
\STATE Set $\PRikst=\Paux\lb\SRstk,i\rb$ for all $i\in\sNR$, and set the NRT scheduler according to \eqref{mu_ist}.
\STATE For each $i\in\sNNR$, set $r_i(k)=a_i(k)$ if $Q_i(k)<\Bmax$ and $0$ otherwise.
\STATE Update equations \eqref{Queues}, \eqref{DL_VQ} and \eqref{P_avg_VQ} at the end of the $k$th slot.
\end{algorithmic}
\end{algorithm}

\subsection{Optimality of Proposed Algorithm}
\label{Optimality_DL_Section}
We first define $\bRopt$ to be the throughput of NRT user $i$ under the optimal algorithm that solves \eqref{Prob_DL}. We define this algorithm to be the one that sets, at each time slot $k$, the variables $\PRik$, $\muRik$, $\Oneik$ and $\Rik$ to the values $\PRopt$, $\muRopt$, $\Oneopt$ and $\Ropt$, respectively, where the latter values satisfy
\begin{align}
\limsup_{K\rightarrow\infty}\frac{1}{K}\sum_{k=0}^{K-1}\EE{\Oneopt}\geq &\lambda_iq_i, \hspace{0.2in} \forall i\in\sNR,
\label{RT_QoS_Opt}\\
\limsup_{K\rightarrow\infty}\frac{1}{K}\sum_{k=0}^{K-1}\sum_{i\in\sN}\EE{\frac{\muRopt\PRopt}{\Ts}}\leq &\Pavg,
\label{P_avg_Opt}\\
\limsup_{K\rightarrow\infty}\frac{1}{K}\sum_{k=0}^{K-1}\EE{\frac{\muRopt\Ropt}{L}}=&\bRopt, \hspace{0.2in} \forall i\in\sNNR,
\label{NRT_QoS_Opt}
\end{align}
where $\bRopt$ is the optimal rate for user $i\in\sNNR$ with respect to solving \eqref{Prob_DL}. The following theorem gives a bound on the performance of Algorithm \ref{Scheduling_Alg} compared to the optimal algorithm that has a genie-aided knowledge of $\bRopt$ which, we show that, due to this knowledge it can solve the problem optimally.
\begin{thm}
\label{Optimality_Thm}
For the on-off channel model, if problem \eqref{Prob_DL} is feasible, then for any $\Bmax >0$ Algorithm \ref{Scheduling_Alg} results in satisfying all constraints in \eqref{Prob_DL} and achieves an average rate satisfying
\begin{equation}
\sum_{i\in\sNNR} \bmuRi\geq \sum_{i\in\sNNR}{\bRopt} - \frac{C_1}{L\Bmax }.
\label{Optimality_Eq}
\end{equation}
\end{thm}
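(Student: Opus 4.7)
The plan is to carry out the standard Lyapunov drift-plus-penalty argument built around the bound \eqref{Drift_minus_Reward_Bound}. The key observation is that Algorithm \ref{Scheduling_Alg} was designed precisely to minimize the right-hand side of \eqref{Drift_minus_Reward_Bound} over all admissible per-slot decisions: Lemma \ref{NRT_Lemma_On_Off} together with Theorem \ref{Pow_Alloc_Nec_Thm} solves the inner joint scheduling-and-power-allocation problem \eqref{Max_Prob}, while the thresholded rule $r_i(k)=a_i(k)$ if $\Qik<\Bmax$ and $r_i(k)=0$ otherwise minimizes the decoupled admission term. Consequently, evaluating the right-hand side of \eqref{Drift_minus_Reward_Bound} at any alternative (possibly randomized) feasible decision rule produces an upper bound that the left-hand side, taken under Algorithm \ref{Scheduling_Alg}, must also satisfy.

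I would then instantiate the alternative rule as an $\omega$-only stationary randomized policy whose decisions depend only on the channel realization $\gammaik$ (not on $\bfU(k)$), and that achieves the NRT rate vector $(\bRopt)_{i\in\sNNR}$ while satisfying \eqref{RT_QoS_Opt}--\eqref{NRT_QoS_Opt} in expectation. Under such a policy the terms of \eqref{Drift_minus_Reward_Bound} multiplied by $\Yik$ and $X(k)$ are non-positive in conditional expectation because the RT delivery ratio and average power constraints are met, the $\Qik$-weighted terms cancel by queue-flow balance on \eqref{Queues}, and the NRT reward term equals $-\Bmax L\sum_{i\in\sNNR}\bRopt$. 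Substituting into \eqref{Drift_minus_Reward_Bound} yields
\begin{equation*}
\Delta(k)-\Bmax\sum_{i\in\sNNR}\EEU{Lr_i(k)} \leq C_1 - \Bmax L\sum_{i\in\sNNR}\bRopt.
\end{equation*}

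Next I would take total expectation, sum over $k=0,\ldots,K-1$, telescope via $\sum_{k}\EE{\Delta(k)}=\EE{L_{\rm yap}(\bfU(K))}-\EE{L_{\rm yap}(\bfU(0))}\geq -L_{\rm yap}(\bfU(0))$, divide by $KL\Bmax$, and let $K\to\infty$. This produces a lower bound of $\sum_{i\in\sNNR}\bRopt - C_1/(L\Bmax)$ on the long-run admitted rate. To convert this into the throughput bound \eqref{Optimality_Eq}, I would use flow conservation on \eqref{Queues}: the thresholding keeps $\Qik$ deterministically bounded by $\Bmax+L$, so $\EE{\Qik}/K\to 0$ and the admitted and served rates coincide in the long run, giving $\lim_K (1/K)\sum_k\EE{r_i(k)}=L\bmuRi$.

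For constraint satisfaction I would reuse the same drift bound without subtracting the reward: the standard computation shows that $\EE{Y_i^2(K)}$, $\EE{X^2(K)}$ and $\EE{Q_i^2(K)}$ grow at most linearly in $K$, hence every virtual and physical queue is mean rate stable. Lemma \ref{Mean_Rate_Lemma}, and its direct analogue for $X(k)$, then yield \eqref{RT_QoS} and \eqref{P_avg}, while the deterministic bound $\Qik\leq\Bmax+L$ immediately gives \eqref{NRT_QoS}. Constraints \eqref{Admission_Decision}, \eqref{P_max}, \eqref{Single_Tx_at_a_Time} and \eqref{muRik_Constr_NRT} are enforced pointwise by Algorithm \ref{Scheduling_Alg}. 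The main technical obstacle is invoking the $\omega$-only stationary randomized policy attaining $(\bRopt)_{i\in\sNNR}$ with the required expectations; establishing this rigorously from feasibility of \eqref{Prob_DL} requires a Carath\'eodory-style argument on the closure of the achievable region and is the only non-routine step, with the remainder of the proof reducing to bookkeeping on \eqref{Drift_minus_Reward_Bound}.
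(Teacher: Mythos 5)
Your proposal is correct and takes essentially the same route as the paper's proof: Algorithm \ref{Scheduling_Alg} minimizes the right-hand side of \eqref{Drift_minus_Reward_Bound}, which is then evaluated at a genie-aided optimal policy satisfying \eqref{RT_QoS_Opt}--\eqref{NRT_QoS_Opt}, telescoped over slots to yield \eqref{Optimality_Eq}, while mean rate stability of $X(k)$ and $Y_i(k)$ (via Lemma \ref{Mean_Rate_Lemma} and its analogue, from the linear growth of the second moments) gives constraints \eqref{RT_QoS} and \eqref{P_avg}. The only differences are presentational: you make the comparison policy an explicit $\omega$-only stationary randomized policy and honestly flag its existence as the non-routine step, and you derive the deterministic bound $\Qik\leq\Bmax+L$ for \eqref{NRT_QoS}, whereas the paper defines the optimal algorithm directly through the time-average relations \eqref{RT_QoS_Opt}--\eqref{NRT_QoS_Opt} and defers the final bookkeeping to Theorem 2 of \cite{li2011delay}.
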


\begin{proof}
See Appendix \ref{App_Optimality_Thm_Proof}
\end{proof}

Theorem \ref{Optimality_Thm} says that Algorithm \ref{Scheduling_Alg} yields an objective function \eqref{Prob_DL} that is arbitrary close to the performance of the optimal algorithm that solves \eqref{Prob_DL}.

\section{Extensions to Continuous Fading Channels}
\label{Continuous_Fading}
In the case of continuous fading, i.e. $\gammaik\in [0,\gammamax]$ where $\gammamax<\infty$ is the maximum channel gain that $\gammaik$ can take, we expect the power allocation to depend on the channel gain. An algorithm that solves this case is a generalization of Algorithm \ref{Scheduling_Alg} that assumes $\gammaik\in\{0,1\}$. However, as will be demonstrated later, the scheduling algorithm of the RT users has a higher complexity order than the special case of on-off channel gains.

We adopt the same model as in Section \ref{Model} except that we allow $\gammaik$ to take any value in the interval $[0,\gammamax]$, for all $i\in\sN$. The transmission rate for this case is still given by \eqref{Rate_On_Off}, and the optimization problem is the same as \eqref{Prob_DL} with a new assumption for $\gammaik$.

\subsection{Derivation of the Algorithm}
\label{Derivation_Cont}
Algorithm \ref{Scheduling_Alg_Cont} is based on the same Lyapunov optimization procedure as explained in Section \ref{Proposed_Algorithm}. Following this procedure, we reach optimization problem \eqref{Max_Prob_RT} with the new definition of $\gammaik$. We now present the solution for the NRT users followed by that of the RT users.

\begin{lma}
\label{NRT_Lemma_Cont}
If user $i\in\sNNR$ is scheduled to transmit any of its NRT data during the $k$th slot, then the optimum power level for this NRT w.r.t. problem \eqref{Max_Prob_RT} in the continuous fading case is given by
\begin{equation}
\PNRik=\min\lb\lb\frac{Q_i(k)}{X(k)}-\frac{1}{\gammaik}\rb^+,\Pmax\rb.
\label{H2O_Pow_Cont}
\end{equation}
Moreover, in the heavy traffic regime, the scheduled NRT user, if any, that optimally solves problem \eqref{Prob_DL} is given by
\begin{equation}
\ist=\arg\max_{i\in\sNNR}\PsiNRikst,
\label{ist}
\end{equation}
with ties broken randomly uniformly, while $\PsiNRikst$ comes by substituting \eqref{H2O_Pow_Cont} in \eqref{PsiNRik}.
\end{lma}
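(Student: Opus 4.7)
The plan is to mirror the two-part proof of Lemma \ref{NRT_Lemma_On_Off}, exploiting the fact that the only structural change from the on-off case is that $\gammaik$ now varies continuously in $[0,\gammamax]$. The argument decomposes naturally into a power-allocation step (for a generic scheduled NRT user) followed by a scheduling step (which user to pick), the latter relying on the heavy-traffic assumption.

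For the power allocation, I would first observe that in problem \eqref{Max_Prob_RT} the only term depending on $\PRik$ for NRT user $i$ is
\[
\muRik\PsiNRik \;=\; \muRik\!\left(\Qik\log(1+\PRik\gammaik)-\frac{X(k)\PRik}{\Ts}\right),
\]
which is concave in $\PRik$ (a nonnegative multiple of a concave function minus a linear term). Fixing $\muRik>0$, differentiating with respect to $\PRik$, and equating the derivative to zero gives a single interior stationary point expressible in closed form. Imposing the box constraint $\PRik\in[0,\Pmax]$ from \eqref{P_max} via the $(\cdot)^+$ and $\min(\cdot,\Pmax)$ operations then yields the water-filling form \eqref{H2O_Pow_Cont}, with the water level being $\Qik/X(k)$ and the inverse-channel shift $1/\gammaik$ replacing the ``$1$'' that appears in the on-off expression \eqref{H2O_Pow_On_Off}.

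For the scheduling part, the heavy-traffic contradiction argument used in Lemma \ref{NRT_Lemma_On_Off} carries over almost verbatim. Assuming by contradiction that an optimal NRT schedule assigns positive service times $\alpha,\beta>0$ to two distinct users $\ist$ and $j$ with $\PsiNRikst$ strictly smaller at $j$ than at $\ist$, I would reassign the $\beta$ seconds currently given to $j$ over to $\ist$. This reassignment is feasible because, in the heavy-traffic regime, the NRT user with the longest queue has enough backlog to fill the extended duration; and it strictly increases the objective of \eqref{Max_Prob_RT} by $\beta\bigl(\PsiNRikst|_{i=\ist}-\PsiNRikst|_{i=j}\bigr)>0$, contradicting optimality. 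Hence at most one NRT user is scheduled, and it must be a maximizer of $\PsiNRikst$ over $\sNNR$, with uniform random tie-breaking, giving \eqref{ist}.

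I do not anticipate a serious obstacle: the continuous-fading generalization only changes the first-order condition through the factor $\gammaik$ inside the log, which is handled routinely, and the scheduling contradiction argument never used the discrete structure of the channel. The main point requiring care is the order of optimization in \eqref{Max_Prob_RT}: because $\muRik$ and $\PRik$ are chosen jointly, I must first show that, for any candidate $\muRik>0$, the optimal $\PRik$ is the one in \eqref{H2O_Pow_Cont} (so that $\PsiNRikst$ is well-defined as an upper envelope), and only then invoke the heavy-traffic swap to reduce to a single scheduled NRT user.
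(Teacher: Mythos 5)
Your proposal is correct and takes essentially the same route as the paper, which proves this lemma simply by declaring it ``similar to that of Lemma \ref{NRT_Lemma_On_Off}'': the same first-order condition on the concave term $\PsiNRik$ under the box constraint \eqref{P_max}, followed by the identical heavy-traffic time-reassignment contradiction for the scheduling part. One small observation: differentiating $\Qik\log\lb 1+\PRik\gammaik\rb - X(k)\PRik/\Ts$ gives the water level $\Ts\Qik/X(k)$ (consistent with \eqref{H2O_Pow_On_Off} at $\gammaik=1$), so the missing factor $\Ts$ in \eqref{H2O_Pow_Cont} appears to be a typo in the paper's statement rather than a flaw in your argument.
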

\begin{proof}
The proof is similar to that of Lemma \ref{NRT_Lemma_On_Off} and is omitted for brevity.
\end{proof}

Lemma \ref{NRT_Lemma_Cont} presents the optimal power and scheduling policy for the NRT users. To solve for the RT users, we assume a fixed subset $\SRk\subseteq\sNR$ of RT users to be scheduled during the $k$th slot and find the power allocation of these users. Consequently, the optimum set $\SRkst$ is the one that maximizes \eqref{Max_Prob_RT}. In Section \ref{Proposed_Algorithm_Cont}, we present an algorithm that finds this optimum set as well as discussing the complexity of this algorithm.

Assuming that the users in the set $\SRk$ are scheduled at the $k$th slot, the problem is to find the transmission power levels for all the users in this set. We answer this question in the following theorem.

\begin{thm}
\label{PRik_Lemma_Cont}
In the continuous-fading channel model, given some non-empty set $\SRk$, the power allocation policy
\begin{equation}
\PRik=\min\lb\frac{1}{\gammaik}\left[\frac{\tilde{\phi}\gammaik-1}{\LW{\left[\tilde{\phi}\gammaik-1\right]}}-1\right],\Pmax\rb,
\label{Lambert_Pow}
\end{equation}
$i\in\SRk$, with $\tilde{\phi}\triangleq\lb\PsiNRistkst+\phi\rb\Ts/X(k)$ and $\PsiNRikst$ defined in Lemma \ref{NRT_Lemma_Cont}, is optimal w.r.t. \eqref{Max_Prob_RT} when $\phi$ is set to a non-negative value that satisfies \eqref{Single_Tx_at_a_Time}.
\end{thm}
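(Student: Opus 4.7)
The plan is to mirror the Lagrangian argument used for Theorem \ref{Pow_Alloc_Nec_Thm}, keeping the channel gain $\gammaik$ symbolic throughout instead of specializing to $\gammaik=1$. First, I fix the candidate schedule $\SRk$ and eliminate $\muRistk$ from \eqref{Max_Prob_RT} using \eqref{mu_ist}, so that the remaining decision variables are just the RT powers $\{\PRik\}_{i\in\SRk}$, subject to the box constraints $0\leq \PRik\leq \Pmax$ and the slot-length constraint $\sum_{i\in\SRk} L/\log(1+\PRik\gammaik)\leq \Ts$ (which is equivalent to $\muRistk \geq 0$). Attaching a multiplier $\phi\geq 0$ to this slot-length constraint yields the same Lagrangian as in \eqref{Lagrangian}, now with user-dependent $\gammaik$.

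Second, I take the stationarity condition $\partial \Lag/\partial \PRik = 0$, which takes the same form as \eqref{Deriv_Lagr}. With the substitutions in \eqref{phi_tilde}--\eqref{P_tilde}, the equation reduces to $\tilde{P}e^{\tilde{P}}=(\tilde{\phi}\gammaik-1)e^{-1}$. Inverting through the principal branch $W_0$ of the Lambert $W$ function and back-solving for $\PRik$ gives the closed-form expression appearing inside the $\min$ in \eqref{Lambert_Pow}. As in the on-off case, the equation $xe^x = (\tilde{\phi}\gammaik-1)e^{-1}$ has two real roots on the domain where both branches are defined; one root yields a negative $\PRik$ and must be discarded, so the principal branch is the correct choice. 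The outer $\min$ with $\Pmax$ is the standard projection onto the box constraint \eqref{P_max}: whenever the stationary point exceeds $\Pmax$, the Lagrangian is strictly monotone in $\PRik$ on the admissible interval, and clipping at $\Pmax$ is optimal.

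Third, I apply complementary slackness on $\phi$. Either $\phi=0$ is feasible, in which case the unconstrained Lambert powers automatically satisfy $\sum_{i\in\SRk} L/\log(1+\PRik\gammaik)\leq \Ts$ so that $\muRistk \geq 0$; or $\phi$ must be taken strictly positive, forcing $\muRistk = 0$ so that \eqref{Single_Tx_at_a_Time} is saturated by the RT users alone. In both cases, $\phi$ is chosen as the non-negative scalar that renders \eqref{Single_Tx_at_a_Time} consistent with the resulting powers, which is exactly the statement of the theorem.

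The main obstacle, and the place where the continuous-fading case departs from Theorem \ref{Pow_Alloc_Nec_Thm}, is that one can no longer exploit $\gammaik\equiv 1$ to obtain equal powers and hence a closed-form $\phi$ as in \eqref{RT_NO_NRT_Pow}. Instead, when $\phi>0$ is required, the powers \eqref{Lambert_Pow} are heterogeneous in $\gammaik$, and $\phi$ must be determined by a one-dimensional search. To make this well-posed I would verify that $\phi \mapsto \sum_{i\in\SRk} L/\log(1+\PRik(\phi)\gammaik)$ is continuous and monotone decreasing in $\phi$ (each $\PRik$ being monotone non-decreasing in $\tilde{\phi}$ by inspection of \eqref{Lambert_Pow_w_phi}), so a unique feasible $\phi\geq 0$ exists. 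A secondary subtlety is checking that $\tilde{\phi}\gammaik-1 \geq -e^{-1}$ throughout, which is needed for $W_0$ to be real-valued on its principal branch; this is guaranteed because $\PsiNRistkst\geq 0$ forces $\tilde{\phi}\gammaik\geq 0 > 1-e^{-1}$ whenever the scheduled user's channel is nontrivial, so the Lambert expression in \eqref{Lambert_Pow} is well defined.
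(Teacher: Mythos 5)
Your proof follows essentially the same route as the paper's: the paper's own argument is a sketch that defers to the Lagrangian/complementary-slackness derivation of Theorem \ref{Pow_Alloc_Nec_Thm} and adds exactly the two points you supply, namely that the left-hand side of \eqref{Single_Tx_at_a_Time} is monotone (negative derivative) in $\phi$ so the optimal $\phi$ can be found by a one-dimensional bisection, together with a bracketing bound $\phi\leq\phimax$. One small correction to your final remark: since the paper's notation $\LW{z}$ already denotes $W_0\lb z e^{-1}\rb$, the domain condition for the principal branch is $\lb\tilde{\phi}\gammaik-1\rb e^{-1}\geq -e^{-1}$, i.e. $\tilde{\phi}\gammaik\geq 0$ --- which your hypotheses $\PsiNRistkst\geq0$, $\phi\geq0$ do give --- whereas the chain ``$\tilde{\phi}\gammaik\geq 0 > 1-e^{-1}$'' is false as written (note $1-e^{-1}\approx 0.63>0$), though the conclusion that \eqref{Lambert_Pow} is well defined still stands.
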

\begin{proofsketch}
The proof is similar to that of Theorem \ref{Pow_Alloc_Nec_Thm}. The only difference is that we have to obtain the optimum value of $\phi$ satisfying \eqref{Single_Tx_at_a_Time}. We note that instead of finding $\phi>0$ using a 1-dimensional grid search, we can use the bisection method \cite[Ch.9]{Numerical_Recipes_Ch9} which requires the monotonicity of the left-hand-side of \eqref{Single_Tx_at_a_Time}, a fact that can be shown easily by showing that the derivative, of this left-hand-side, with respect to $\phi$ is always negative. Moreover, since the bisection algorithm needs a bracketing interval, it can be easily shown that the optimum $\phi$ satisfies $\phi\leq \phimax\triangleq-\PsiNRistkst+\exp(L\vert\SRk\vert/\Ts)L\vert\SRk\vert X(k)\Pmax/(\exp(L\vert\SRk\vert/\Ts)-1)$.
\end{proofsketch}

It is clear that the Lambert power policy in \eqref{Lambert_Pow} has a different structure than the water-filling policy in \eqref{H2O_Pow_Cont}. The reason is because the former is for the RT users while the later is for the NRT users. We plot the two policies in Fig. \ref{Plot_H2O_vs_Lambert} with $L=1$, $\Ts=1$, $\Pmax=20$ while $\Qik/X(k)=15$. The Lambert policy is plotted assuming a single RT user is scheduled at slot $k$ while the water-filling policy is plotted assuming a single NRT user is scheduled at slot $k$. We note that when a RT user $i$ is the only scheduled user, \eqref{Lambert_Pow} is equivalent to
\begin{equation}
\PRik=\min\lb\frac{e^{L/\Ts}-1}{\gammaik},\Pmax\rb,
\label{Lambert_Pow_Single}
\end{equation}
We contrast the fact that, while the water-filling is an increasing function in the channel gain, the Lambert is a decreasing function in the channel gain. This is because the RT user has a single packet of a fixed length to be transmitted. If the channel gain increases, then the power decreases to keep the same transmission rate resulting in the same transmission duration of one slot. This result holds when multiple RT users are scheduled as well as demonstrated in the following theorem.
\begin{figure}
	\centering
		\includegraphics[width=1\columnwidth]{\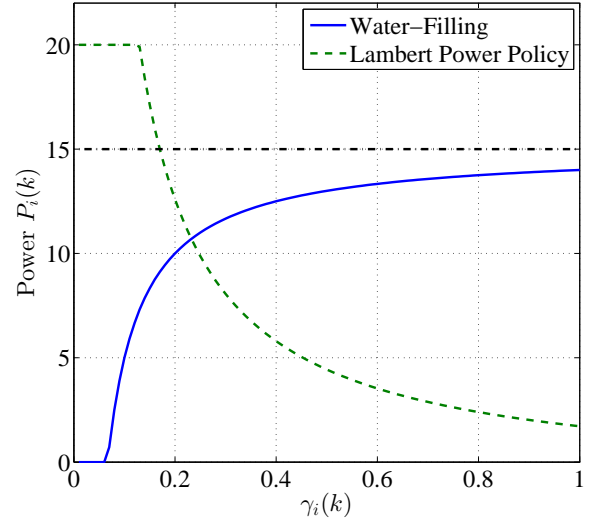}
		\caption{The Lambert power policy decreases with the channel gain, while the water-filling policy increases with the gain.}
	\label{Plot_H2O_vs_Lambert}
\end{figure}

\begin{thm}
\label{Thm_Lambert_Monotonicity}
Let $\SRk$ be some scheduling RT set at slot $k$. The power $\PRik$ given by \eqref{Lambert_Pow} is monotonically decreasing in $\gammaik$ $\forall i\in \SRk$.
\end{thm}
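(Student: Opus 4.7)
My plan is to avoid working with the Lambert $W_0$ function directly in \eqref{Lambert_Pow}, and instead use the implicit relation that produced it in the proof of Theorem \ref{Pow_Alloc_Nec_Thm}. First, the outer $\min$ with $\Pmax$ trivially preserves the decreasing property, so it suffices to handle the unconstrained expression. Treating $\tilde{\phi}$ as fixed (it is a slot-level Lagrange multiplier for $\SRk$, not a per-user quantity), I will differentiate implicitly rather than attempt to bound derivatives of $W_0$.

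The reparametrization I would use is $R \triangleq \log(1+\PRik\gammaik)$, so that $\PRik = (e^R-1)/\gammaik$. Recall from \eqref{P_tilde} that the stationarity condition $\tilde P e^{\tilde P}=(\tilde\phi\gamma-1)e^{-1}$ together with $R=1+\tilde P$ collapses to the clean implicit relation
\begin{equation}
(R-1)e^R + 1 \;=\; \tilde{\phi}\,\gammaik .
\label{plan:implicit}
\end{equation}
Differentiating \eqref{plan:implicit} in $\gammaik$ gives $R e^R \, dR/d\gammaik = \tilde{\phi}$, hence $dR/d\gammaik = \tilde\phi/(R e^R)$.

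Now I would differentiate $\PRik=(e^R-1)/\gammaik$ by the quotient rule, substitute the expression for $dR/d\gammaik$, and then use \eqref{plan:implicit} to eliminate $\tilde\phi\gammaik$. The algebra should telescope to
\begin{equation}
\gammaik^2 \, \frac{d\PRik}{d\gammaik} \;=\; -\,\frac{e^R - R - 1}{R}.
\label{plan:derivative}
\end{equation}
The conclusion then follows from the elementary inequality $e^R \geq 1+R$ (strict for $R\neq 0$), which ensures the right-hand side of \eqref{plan:derivative} is strictly negative whenever $\PRik>0$ (equivalently $R>0$). On the boundary where the $\min$ binds, $\PRik=\Pmax$ is constant in $\gammaik$, which is consistent with non-increasing behavior. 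A brief note should also verify that the transition from the interior formula to the $\Pmax$ cap is continuous and occurs at the unique $\gammaik$ solving $(e^{\log(1+\Pmax\gamma)}-1)/\gamma=\Pmax$.

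I expect the main obstacle to be algebraic rather than conceptual: the quotient-rule simplification leading to \eqref{plan:derivative} needs care, because one has to recognize that $\tilde\phi\gammaik=(R-1)e^R+1$ is the substitution that cancels both $\tilde\phi$ and the $e^R$ terms. Once one resists the temptation to manipulate $W_0$ and works instead with the $R$-parametrization, the proof reduces to the one-line inequality $e^R>1+R$ for $R>0$, which also gives a clean intuition: increasing channel gain lets the BS meet the same fixed-length deadline with less energy, and the convexity gap $e^R-1-R$ quantifies exactly how much less.
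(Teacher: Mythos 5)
Your core computation is correct, and I verified it: with $R\triangleq\log(1+\PRik\gammaik)$, the stationarity condition \eqref{P_tilde} is equivalent to $(R-1)e^{R}+1=\tilde{\phi}\gammaik$; implicit differentiation gives $Re^{R}\,dR/d\gammaik=\tilde{\phi}$, and the quotient rule together with the back-substitution $\tilde{\phi}\gammaik=(R-1)e^{R}+1$ does telescope to $\gammaik^{2}\,d\PRik/d\gammaik=-(e^{R}-R-1)/R$, which is strictly negative whenever $R>0$ by $e^{R}>1+R$. In spirit this is exactly the paper's (one-line) proof sketch---differentiate \eqref{Lambert_Pow} in $\gammaik$ and show the derivative is non-positive---but your execution via the $R$-parametrization is cleaner than a direct attack, since it avoids the derivative identities for $W_0$ entirely.

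There is, however, one genuine gap in scope. The paper's sketch stipulates that the differentiation be done \emph{while having $\phi$ satisfy \eqref{Single_Tx_at_a_Time}}, whereas you freeze $\tilde{\phi}$. Freezing is legitimate when the NRT user receives positive time (complementary slackness then forces $\phi=0$, and $\PsiNRistkst$ does not depend on RT user $i$'s gain), but in the regime where the RT users fill the slot, $\phi>0$ is pinned by the binding constraint $\sum_{j\in\SRk}L/R_j=\Ts$ and therefore varies with $\gammaik$; your argument as written does not cover this case. The patch is short and goes in your favor: differentiating the binding constraint, using $R_je^{R_j}\,dR_j=\gamma_j\,d\tilde{\phi}+\tilde{\phi}\,\delta_{ij}\,d\gamma_i$, yields
\begin{equation*}
\frac{d\tilde{\phi}}{d\gamma_i}=-\frac{\tilde{\phi}\big/\lb R_i^{3}e^{R_i}\rb}{\sum_{j\in\SRk}\gamma_j\big/\lb R_j^{3}e^{R_j}\rb}\leq 0,
\qquad
\left.\frac{\partial P_i}{\partial\tilde{\phi}}\right|_{\gamma_i}=\frac{1}{R_i}>0,
\end{equation*}
so the total derivative $dP_i/d\gamma_i$ is the sum of your frozen-multiplier term and a second non-positive term, i.e., the coupling only strengthens the monotonicity. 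Separately, fix one peripheral slip: your stated transition condition $(e^{\log(1+\Pmax\gamma)}-1)/\gamma=\Pmax$ is a tautology holding for every $\gamma$; the cap binds where the unconstrained Lambert expression equals $\Pmax$, which by your own monotonicity result occurs at a unique (small) gain, below which $\PRik=\Pmax$ and above which the strictly decreasing branch applies, so the overall policy is non-increasing and continuous as claimed.
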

\begin{proofsketch}
Proof follows by differentiating \eqref{Lambert_Pow} with respect to $\gammaik$ for some user $i$, while having $\phi$ satisfying \eqref{Single_Tx_at_a_Time}, and showing that the resulting derivative is always non-positive for $\gammaik\geq 0$.
\end{proofsketch}

The optimum scheduling algorithm for the RT users is to find, among all subsets of the set $\sNR$, the set that gives the highest objective function of \eqref{Max_Prob_RT}.

\subsection{Proposed Algorithm and Proof of Optimality}
\label{Proposed_Algorithm_Cont}
The exhaustive approach to the scheduling problem is to evaluate the objective function of \eqref{Max_Prob_RT} for all $2^{\NR}$ possible sets and choose the set that gives the highest objective function. This may be not practical when the number of RT users is large. Observing the approach in the special on-off case and inspired by Theorem \ref{NRik_Lemma_On_Off} that reduces the search space, we provide here a similar approach. We first provide the following definition which is analogous to Theorem \ref{NRik_Lemma_On_Off}.

\begin{thm}
\label{NRik_Lemma_Cont}
At slot $k$, for any set $\SRk$, if there exists some $i\notin\SRk$ and some $j\in\SRk$ such that $\Yik>Y_j(k)$ and $\gammaik>\gamma_j(k)$, then $\SRk$ cannot be an optimal RT set, with respect to problem \eqref{Max_Prob_RT}, for the continuous channel model.
\end{thm}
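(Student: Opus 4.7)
The natural plan is a proof by contradiction via a swap argument, closely paralleling the proof of Theorem~\ref{NRik_Lemma_On_Off} but adapted to handle unequal channel gains. Suppose for contradiction that $\SRk$ is optimal with respect to \eqref{Max_Prob_RT} and that the hypotheses of the theorem hold: there exist $i\notin\SRk$ and $j\in\SRk$ with $\Yik>Y_j(k)$ and $\gammaik>\gamma_j(k)$. I would construct the swapped set $\SRk'\triangleq(\SRk\setminus\{j\})\cup\{i\}$ and exhibit a feasible power/time allocation for $\SRk'$ whose objective strictly exceeds the optimal objective attained on $\SRk$. Because the true optimum over $\SRk'$ can only do at least as well, this will contradict the optimality of $\SRk$.

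The key construction is to ``mimic'' user $j$'s time share with user $i$. Let $\{P_\ell(k)\}_{\ell\in\SRk}$ denote the optimal powers (from Theorem~\ref{PRik_Lemma_Cont}) on $\SRk$, with corresponding $\mu_\ell(k)=L/\log(1+P_\ell(k)\gamma_\ell(k))$. For the candidate allocation on $\SRk'$: keep the same $P_\ell(k),\mu_\ell(k)$ for all $\ell\in\SRk\setminus\{j\}$; for the new entrant $i$, pick $\PRik$ so that $\log(1+\PRik\gammaik)=\log(1+P_j(k)\gamma_j(k))$, equivalently $\PRik=P_j(k)\gamma_j(k)/\gammaik$. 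Then $\muRik=L/\Rik=\mu_j(k)$, so the total time consumed by the RT users is unchanged, and consequently $\muRistk$ (and hence the NRT contribution $\PsiNRistkst\muRistk$ in \eqref{Max_Prob_RT}) is identical for the two allocations. Feasibility with respect to \eqref{Single_Tx_at_a_Time}, \eqref{muRik_Constr_NRT}, and \eqref{P_max} follows immediately since $\PRik<P_j(k)\leq\Pmax$ by $\gammaik>\gamma_j(k)$.

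Now I compare objectives. The NRT term cancels, and the RT contributions differ only in the $j\!\to\!i$ swap. Using the form $\PsiRik=\Yik-\muRik X(k)\PRik/\Ts$ (since $\Oneik=1$) together with $\muRik=\mu_j(k)$ and $\PRik\gammaik=P_j(k)\gamma_j(k)$, I obtain
\begin{equation}
\PsiRik-\Psi_{\mathrm R,j}(k)=\bigl(\Yik-Y_j(k)\bigr)+\frac{X(k)\mu_j(k)}{\Ts}\bigl(P_j(k)-\PRik\bigr).
\end{equation}
The first bracket is strictly positive by assumption, and the second is strictly positive because $\gammaik>\gamma_j(k)$ forces $\PRik=P_j(k)\gamma_j(k)/\gammaik<P_j(k)$ (here I use $X(k)\geq 0$ and $\mu_j(k)>0$, the latter because $j\in\SRk$). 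Hence the total objective on $\SRk'$ under this feasible allocation is strictly larger than the optimal objective on $\SRk$, yielding the desired contradiction.

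The only subtle step is ensuring that the mimicking allocation is genuinely admissible, i.e.\ that equation \eqref{Single_Tx_at_a_Time} continues to be met without having to re-solve Theorem~\ref{PRik_Lemma_Cont} for a new Lagrange multiplier $\phi$. This is handled for free by the rate-matching choice $\PRik=P_j(k)\gamma_j(k)/\gammaik$, which preserves every per-user duration. Note that I am not claiming the constructed powers solve \eqref{Max_Prob_RT} for $\SRk'$; I only need a single feasible allocation that beats $\SRk$'s optimum, since the true optimum over $\SRk'$ dominates any feasible one. This avoids having to re-examine the Lambert form and its monotonicity (Theorem~\ref{Thm_Lambert_Monotonicity}) inside the swap argument, making the proof a clean one-line comparison.
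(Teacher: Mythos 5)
Your proof is correct and takes essentially the same route as the paper, whose (sketched) proof is also a contradiction argument that swaps users $i$ and $j$ to strictly increase the objective of \eqref{Max_Prob_RT}; your rate-matching choice $\PRik=P_j(k)\gamma_j(k)/\gammaik$, which preserves every duration and hence feasibility and the NRT term, is a clean instantiation of that swap. One cosmetic point: when $X(k)=0$ the second term in your comparison is zero rather than strictly positive, but the conclusion is unaffected since $\Yik>Y_j(k)$ already makes the difference strict.
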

\begin{proofsketch}
The proof is carried out by contradiction. We can show that if $\Yik>Y_j(k)$ and $\gammaik>\gamma_j(k)$ for some $i\notin\SRk$ and some $j\in\SRk$, then we could form another set $\script{S}'(k)$ by swapping users $i$ and $j$ and thus increase the objective function of \eqref{Max_Prob_RT}.
\end{proofsketch}
This theorem provides a sufficient condition for non-optimality. In other words, we can make use of this theorem to restrict our search algorithm to the sets that do not satisfy this property. Before presenting the proposed algorithm, we define the set $\SRT$ as the set of all possible subsets of the set $\sNR$.
\begin{algorithm}
\caption{Lambert-Strict Algorithm}
\begin{algorithmic}[1]
\label{Scheduling_Alg_Cont}
\STATE Define the auxiliary functions $\Psiaux(\cdot):\SRT\rightarrow \mathbb{R}_+$ and $\Paux(\cdot,\cdot):\SRT\times\sNR\rightarrow\mathbb{R}_+$.
\STATE Initialize $\Paux(\sS,i)=0$ for all $\sS\in\SRT$ and all $i\in\sNR$.
\STATE Find the user $\ist$ given in \eqref{ist} and calculate its power given by \eqref{H2O_Pow_Cont}.
\FOR {$\sS\in \SRT$}
\IF{$\exists$ some $i\notin\sS$ and some $j\in\sS$ such that $\Yik>Y_j(k)$ and $\gammaik>\gamma_j(k)$}
\STATE Set $\Psiaux(\sS)=-\infty$.
\STATE Skip this iteration and go to step 4 to continue with the next set in $\SRT$.
\ENDIF
\STATE $\phi\leftarrow\phimax+\Delta\phi$
\WHILE{$\phi\muRik\neq 0$}
\STATE $\phi\leftarrow\phi-\Delta\phi$
\STATE Calculate $\PRik$ given by \eqref{Lambert_Pow} for all $i\in\sS$ and set $\muRistk= \Ts-\sum_{i\in\sS}\muRik$.
\ENDWHILE
\STATE Set $\Psiaux(\sS)=\sum_{i\in\sS} \lb Y_i(k)-X_i(k)\muRik\rb + \PsiNRistkst\muRistk$ and $\Paux(\sS,i)=\PRik, \hspace{0.25in}i\in\sS$.
\STATE Set $\Psiaux(\sS)=\sum_{i\in\sS} \lb Y_i(k)-X_i(k)\muRik\rb + \PsiNRistkst\muRistk$.
\STATE Set $\Paux(\sS,i)=\PRik$, $\forall i\in\sS$.
\STATE $i\leftarrow i+1$.
\ENDFOR
\STATE Set the optimum scheduling set $\SRstk=\arg\max_\sS\Psiaux(\sS)$.
\STATE Set $\PRikst=\Paux\lb\SRstk,i\rb$ for all $i\in\sNR$, and set the NRT scheduler according to \eqref{mu_ist}.
\STATE For each $i\in\sNNR$, set $r_i(k)=a_i(k)$ if $Q_i(k)<\Bmax$ and $0$ otherwise.
\STATE Update equations \eqref{Queues}, \eqref{DL_VQ} and \eqref{P_avg_VQ} at the end of the $k$th slot.
\end{algorithmic}
\end{algorithm}

\begin{thm}
\label{Optimality_Thm_Cont}
For the continuous channel model, if problem \ref{Prob_DL} is feasible, then for any $\Bmax >0$ and any $\epsilon\in(0,1]$ there exists some finite constant $C_2$ such that Algorithm \ref{Scheduling_Alg_Cont} satisfies all constraints in \eqref{Prob_DL} and achieves an average sum throughput satisfying
\begin{equation}
\sum_{i\in\sNNR} \bmuRi\geq \sum_{i\in\sNNR}{\bmuRi^*} - \frac{C_2}{L\Bmax },
\label{Optimality_Eq_Cont}
\end{equation}
where $\bmuRi^*$ is the optimal rate for user $i$ w.r.t. \eqref{Prob_DL}.
\end{thm}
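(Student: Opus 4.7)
The plan is to follow the same Lyapunov drift-plus-penalty framework that was used for Theorem \ref{Optimality_Thm}, since the derivation of the drift-minus-reward bound \eqref{Drift_minus_Reward_Bound} in Section \ref{Motivation_DL} does not rely on the on-off structure: it remains valid verbatim once we allow $\gammaik\in[0,\gammamax]$, provided we enlarge the constant term. So the first step is to re-derive the analogue of the bound \eqref{Drift_minus_Reward_Bound} and collect a new finite constant $C_2\geq C_1$ that absorbs the extra contributions from $\Rmax$ and from the moments of $\gammaik$ (this is why $C_2$, not $C_1$, appears in \eqref{Optimality_Eq_Cont}).

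Next I would argue that Algorithm \ref{Scheduling_Alg_Cont} minimizes the right-hand side of this drift-minus-reward bound at every slot $k$. The admission controller $r_i(k)=a_i(k)\mathds{1}(Q_i(k)<\Bmax)$ handles the $r_i(k)$ term exactly as in the on-off case. The NRT power/scheduling choice is optimal by Lemma \ref{NRT_Lemma_Cont}. For the RT portion, for a fixed set $\SRk$ the Lambert power policy of Theorem \ref{PRik_Lemma_Cont} with the bisection on $\phi$ minimizes the corresponding Lagrangian subject to \eqref{Single_Tx_at_a_Time}; the monotonicity needed for bisection is what the $\phimax$ bracketing in Theorem \ref{PRik_Lemma_Cont} guarantees. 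The only nontrivial point is that Algorithm \ref{Scheduling_Alg_Cont} prunes the search using Theorem \ref{NRik_Lemma_Cont}, so I would explicitly invoke that theorem: any set that is pruned (via the ``$-\infty$'' assignment in the inner \textbf{if}) is provably suboptimal, so the maximizer over the surviving sets coincides with the exhaustive maximizer over all $2^{\NR}$ subsets. This establishes per-slot optimality of the proposed choice.

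The third step is the standard comparison against a genie policy. By feasibility of \eqref{Prob_DL}, for any $\epsilon\in(0,1]$ there exists a stationary randomized policy (depending only on the current channel vector $\bfGamma(k)$) whose decisions $(\PRopt,\muRopt,\Oneopt,\Ropt)$ satisfy \eqref{RT_QoS_Opt}--\eqref{NRT_QoS_Opt} with an $\epsilon$-slack in \eqref{RT_QoS_Opt} and \eqref{P_avg_Opt} and whose expected NRT throughput is within $\epsilon$ of $\sum_{i\in\sNNR}\bRopt$; the existence of such a policy is the main technical device that was not needed as explicitly in the on-off setting, and it is obtained by a Caratheodory-type argument over the i.i.d. channel distribution. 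Substituting this policy into the drift-minus-reward bound and using the per-slot optimality of Algorithm \ref{Scheduling_Alg_Cont}, we obtain, for every $k$,
\begin{equation}
\Delta(k)-\Bmax\sum_{i\in\sNNR}\EEU{Lr_i(k)} \;\leq\; C_2 - \Bmax L\sum_{i\in\sNNR}\bRopt + O(\epsilon).
\end{equation}

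The final step is a telescoping sum over $k=0,\ldots,K-1$, dividing by $K$, letting $K\to\infty$, and then sending $\epsilon\downarrow 0$. Non-negativity of the Lyapunov function yields mean rate stability of $X(k)$ and of each $Y_i(k)$ and $Q_i(k)$, which via Lemma \ref{Mean_Rate_Lemma} (and its analogue for $X(k)$ and $Q_i(k)$) forces the constraints \eqref{RT_QoS}, \eqref{P_avg} and \eqref{NRT_QoS} to hold. The same telescoping delivers the throughput bound \eqref{Optimality_Eq_Cont}. The main obstacle I anticipate is the Caratheodory-style construction of the $\epsilon$-slack stationary randomized policy in the continuous-fading setting, because the channel state space is now a continuum rather than a finite set; this is where the $\epsilon$ in the statement enters, and it must be handled carefully so that the resulting genie policy's decisions are measurable in $\bfGamma(k)$ and yield an $\epsilon$-close objective. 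Once that witness policy is in hand, the remaining steps mirror the on-off proof of Theorem \ref{Optimality_Thm} almost line-by-line.
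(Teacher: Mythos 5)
Your proposal is correct and follows essentially the same route as the paper: the paper's own proof of Theorem \ref{Optimality_Thm_Cont} is simply the argument of Theorem \ref{Optimality_Thm} (drift-minus-reward bound, per-slot minimization by the algorithm, comparison against a genie-aided stationary policy satisfying \eqref{RT_QoS_Opt}--\eqref{NRT_QoS_Opt}, then telescoping and mean rate stability) with the single change that $C_2$ is $C_1$ recomputed using $\Rmax=\log\lb 1+\Pmax\gammamax\rb$, which you identified exactly. Your added details --- invoking Theorem \ref{NRik_Lemma_Cont} to justify that the pruned search still attains the per-slot optimum, and the Caratheodory-type construction of the $\epsilon$-slack randomized witness policy for the continuous channel state space --- are sound elaborations of steps the paper leaves implicit.
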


\begin{proof}
The proof is similar to that of Theorem \ref{Optimality_Thm} and $C_2$ is defined as $C_1$ but with $\Rmax\triangleq\log\lb1+\Pmax\gammamax\rb$. We omit the proof for brevity.
\end{proof}


Due to the problem being a combinatorial problem with a huge amount of possibilities, we could not reach a closed-form expression for the complexity order of this algorithm. However, simulations will show its complexity improvement over the exhaustive search algorithm.

\subsection{Extensions to Packets with Different Lengths}
\label{General_L}
Let $L_i(k)$ be the length of user $i$'s packet at slot $k$, $i\in \sN$. It can be easily shown that the power-allocation-and-scheduling for the NRT users and the power allocation for the RT users, namely Lemma \ref{NRT_Lemma_Cont} and Theorem \ref{PRik_Lemma_Cont} do not change. That is, replacing $L$ with $L_i(k)$ in \eqref{RT_NO_NRT_Pow} yields an optimal solution as well. However, there are two possible extensions for the scheduling algorithm for the RT users, namely Algorithm \ref{Scheduling_Alg_Cont}. We discuss them next.

\subsubsection{Homogeneous RT Users}This is where all packets of all RT users have the same lengths at slot $k$ but they change (randomly and independently) from a slot to the other. That is, $L_i(k)=L_j(k)$ for all $i$ and $j$, but $L_i(k_1)$ and $L_i(k_2)$ need not be the same for $k_1\neq k_2$. This could be the case if all RT users are streaming the same information from the same server, or if their packet lengths change from a slot to the other but are highly correlated across users in the sense that $L_i(k)=L_j(k)$ is a valid approximation. In this case, Algorithm \ref{Scheduling_Alg_Cont} is still optimal since it solves problem \eqref{Max_Prob_RT} which is a per-slot optimization problem, namely, it is not affected with the packet lengths at preceding and succeeding time slots.

\subsubsection{Heterogeneous RT Users}This is where the packet length changes significantly from a user to the other in addition to its change (randomly and independently) from a slot to the other. In this case, the scheduling algorithm of the RT users proposed in Algorithm \ref{Scheduling_Alg_Cont} is suboptimal. In order for the algorithm to be optimal, Steps 5 through 8 of the algorithm need to be removed. That is, the algorithm goes over all subsets of the set $\sNR$. The complexity of the optimal algorithm is exponential in the number of RT users. However, suboptimal algorithms could still be developed. One example is to modify Step 3 of Algorithm \ref{Scheduling_Alg} by sorting the users according to a decreasing order of $Y_i(k)\gammaik/L_i(k)$. Consequently, this yields an algorithm of a linear complexity in $\NR$. The sorting according to $Y_i(k)\gammaik/L_i(k)$ stems from the fact that RT users with higher $\Yik$ and $\gammaik$ and lower $L_i(k)$ should be more favored to be scheduled.

\section{Capacity Region}
\label{Capacity_Region}
In Section \ref{Continuous_Fading}, Algorithm \ref{Scheduling_Alg_Cont} is shown to maximize the NRT sum-throughput subject to the system constraints. In this section we want to study the stability of the system. Specifically, we are interested to answer the following two questions:
\begin{enumerate}
	\item What is the capacity region of the system under the continuous fading model?
	\item What scheduling and power-allocation algorithms can achieve this capacity region?
\end{enumerate}

Studying the system's capacity region means that we need to find all arrival rate vectors $\lambdaNR$ under which the NRT users' queues are stable (i.e. have a stationary distribution). This needs to be studied assuming that all arriving packets are admitted to their respective buffers. Hence we first eliminate the admission controller $\bfrk$ by replacing the queue equation \eqref{Queues} with
\begin{equation}
Q_i(k+1)=\lb \Qik + La_i(k)-\muRik\Rik\rb^+.
\label{Queues_No_Admis}
\end{equation}
More formally, the first question now becomes: what is the closure of all admissible arrival rate vectors? An admissible arrival rate vector is defined next.
\begin{Def}
An arrival rate vector $\lambdaNR\triangleq [\lambda_i]_{i\in\sNNR}$ is said to be admissible if there exists a power-allocation and scheduling algorithm under which constraints \eqref{NRT_QoS} and \eqref{RT_QoS} are satisfied given the power and scheduling constraints \eqref{P_avg}-\eqref{muRik_Constr_NRT}.
\end{Def}
For simplicity we henceforth 
assume that the channel gain $\gammaik\in\sM$ where $\sM$ is a discrete finite set, the elements of which are in the range $[0,\gammamax]$. With a slight abuse in notation, we define $\gammaikm\triangleq\gammaik$ to be the gain of user $i$ when the channel is in fading state $\bfm\parmdef{\gamma}\in\sM^N$ during slot $k$. We also define $\muRikm$ and $\PRikm$ to be, respectively, the duration and power allocated to user $i\in\sN$ when the channel is in fading state $\bfm\parmdef{\gamma}\in\sM^N$ during slot $k$, and $\pim$ to be the probability of occurrence of fading state $\bfm$. We now mention the following definition then state Theorem \ref{Capacity_Thm} that answers the first question.

\begin{Def}
\label{Rate_Region}
An arrival rate vector $\lambdaNR$ is said to belong to the ``Lambert Region'' $\sR$ if and only if there exists a sequence of time duration vectors $\left\{\bfmuRk\right\}$ and a power allocation policy $\left\{\bfPRk\right\}$ that make $\lambdaNR$ satisfy
\begin{equation}
\lambda_i=\frac{1}{L}\sum_{\bfm\in\sM^N}\muRikm\log\lb1+\PRikm\gammaikm\rb \pim ,
\label{NRT_Arr_Rate}
\end{equation}
$i\in\sNNR$, while having $\left\{\bfmuRk\right\}$ and $\left\{\bfPRk\right\}$ satisfy
\begin{align}
&q_i\lambda_i\leq\sum_{\bfm\in\sM^N}\muRikm\log\lb1+\PRikm\gammaikm\rb , \hspace{0.05in} i\in\sNR,
\label{RT_Arr_Rate}\\
&\sum_{i\in\sN}\muRikm\leq \Ts, \hspace{0.25in} \forall k\geq 1,\bfm\in\sM^N,
\label{Single_Tx_Constr}\\
&\underset{K\rightarrow\infty}{\limsup}\frac{1}{K}\sum_{k=1}^K\sum_{i\in\sN}\sum_{m\in\sM}\muRikm\PRikm\leq \Pavg,
\label{P_avg_Capacity}\\
&\muRikm\geq 0 , \hspace{0.25in} i\in\sN, \forall k\geq 1,\bfm\in\sM^N,\\
&\PRikm\geq 0 , \hspace{0.25in} i\in\sN, \forall k\geq 1,\bfm\in\sM^N.
\label{PRikm_Positive}
\end{align}
\end{Def}

\begin{thm}
\label{Capacity_Thm}
If $\lambdaNR(1+\epsilon)\in\sR$ then Algorithm \ref{Scheduling_Alg_Cont} satisfies \eqref{NRT_QoS}-\eqref{muRik_Constr_NRT}. Otherwise, then problem \eqref{Prob_DL} is infeasible.
\end{thm}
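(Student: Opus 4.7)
The plan is to prove the two halves of the theorem separately, both anchored in the drift-plus-penalty machinery of Section \ref{Motivation_DL}. For the achievability direction, I would exploit the fact that $\lambdaNR(1+\epsilon)\in\sR$ yields, via Definition \ref{Rate_Region}, a family of per-fading-state decisions $\{\muRikm,\PRikm\}$ satisfying \eqref{NRT_Arr_Rate}--\eqref{PRikm_Positive} with each $\lambda_i$ inflated to $\lambda_i(1+\epsilon)$. Interpreted as a stationary randomized policy that, in any slot observing fading state $\bfm$, plays $(\muRikm,\PRikm)$, this policy is a feasible (though oblivious-to-$\bfU(k)$) candidate for the per-slot problem \eqref{Max_Prob} whose expected cost depends only on $\pim$ and enters the bound \eqref{Drift_minus_Reward_Bound} linearly against $\Yik$, $\Qik$ and $X(k)$. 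The $\epsilon$-slack on every average constraint then produces a strictly negative drift whenever the queue vector leaves a compact region depending on $\epsilon$. Because Algorithm \ref{Scheduling_Alg_Cont} was engineered to minimize the right-hand side of \eqref{Drift_minus_Reward_Bound} slot by slot, its drift is dominated by that of the randomized witness, so the negative-drift conclusion transfers to the algorithm. A Foster--Lyapunov argument then gives mean rate stability of $\Yik$, $\Qik$ and $X(k)$, and Lemma \ref{Mean_Rate_Lemma} (together with the analogous implications for $X(k)$ and $\Qik$) converts this into constraints \eqref{NRT_QoS}, \eqref{RT_QoS} and \eqref{P_avg}; the remaining \eqref{P_max}, \eqref{Single_Tx_at_a_Time} and \eqref{muRik_Constr_NRT} hold per slot by construction.

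For the converse, I would argue by contraposition: suppose some power-allocation-and-scheduling policy meets \eqref{NRT_QoS}--\eqref{muRik_Constr_NRT}. For each $\bfm\in\sM^N$ I would define the conditional long-run time-averages $\overline{\mu}_i^{(\bfm)}$ of $\muRik$ and $\overline{\mu P}_i^{(\bfm)}$ of $\muRik\PRik$ restricted to slots whose fading realization is $\bfm$. Ergodicity of the i.i.d.\ channel process, combined with mean-rate stability of the $\Qik$ and $\Yik$ processes, forces the NRT arrival rates to equal, and the RT delivered fractions to dominate, the corresponding averaged log-rate. Setting the witness power as $\overline{P}_i^{(\bfm)}\triangleq\overline{\mu P}_i^{(\bfm)}/\overline{\mu}_i^{(\bfm)}$ and applying Jensen's inequality (to the concave map $P\mapsto\log(1+P\gammaikm)$) inside each fading state produces a pair $(\overline{\mu}_i^{(\bfm)},\overline{P}_i^{(\bfm)})$ that satisfies \eqref{NRT_Arr_Rate}--\eqref{PRikm_Positive}, certifying $\lambdaNR\in\sR$ and contradicting the hypothesis.

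The main obstacle will lie in the converse direction, and specifically in the nonlinear coupling between $\muRik$ and $\PRik$ inside the rate $\log(1+\PRik\gammaik)$: a naive time-average of the power does not match the time-average of the rate, so Jensen's inequality has to be applied at the right conditioning level and in the right direction, and the aggregate (rather than per-state) nature of the average-power constraint \eqref{P_avg} must be handled with care. A clean way to close this gap is to realize $\sR$ as the projection of a convex set of per-state pairs $(\muRikm,\muRikm\PRikm)$ and appeal to a standard stationary-randomized-policy existence result (e.g., a Caratheodory-type argument on the achievable per-state operating points), so that any time-averaged admissible strategy can be represented by a deterministic $\bfm$-indexed policy in $\sR$.
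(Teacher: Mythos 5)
Your proposal is correct in substance, and it splits exactly where the comparison is interesting. The achievability half coincides with the paper's proof: the paper likewise treats the per-state tuples $(\muRikm,\PRikm)$ guaranteed by Definition \ref{Rate_Region} as a stationary randomized competitor, uses the fact that Algorithm \ref{Scheduling_Alg_Cont} minimizes the drift bound slot by slot to establish the key comparison inequality \eqref{Key_To_Proof}, and then closes with a Foster--Lyapunov argument via Theorem 5.3.2 of \cite{srikant2013communication}; the one detail you omit, which the paper makes explicit, is that the admission controller must be bypassed --- $r_i(k)=a_i(k)$ with queue dynamics \eqref{Queues_No_Admis} --- since otherwise the $\Bmax$-threshold admission rule stabilizes the queues trivially without actually supporting $\lambdaNR$. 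Your converse, however, takes a genuinely different route: you argue by contraposition, converting any feasible policy into a per-state witness in $\sR$ by applying Jensen at the level of the jointly concave perspective map $(\mu,E)\mapsto\mu\log\lb 1+(E/\mu)\gamma\rb$ (your reformulation in the pairs $(\muRikm,\muRikm\PRikm)$ is the right fix for the $\mu$--$P$ coupling you flag), whereas the paper invokes the strict separation theorem to obtain a hyperplane $(\beta,\delta)$ as in \eqref{Strict_Separation} and shows that the weighted queue sum $H(K)$ diverges under any policy. The trade-off is instructive: the separation route yields an explicit instability certificate (a diverging linear functional of the queues) without ever constructing the witness point, but it silently relies on exactly the convexity step you worry about --- in \eqref{increasing} the paper equates the per-state time-averaged service with $Lx_i$ for some $x\in\sR$, which is precisely the ``time averages land in the region'' claim that your Jensen/Caratheodory argument is designed to justify; in that sense your version makes rigorous the one step the paper leaves informal, at the cost of a longer construction. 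Both routes prove the same logical statement ($\lambdaNR\notin\sR$ implies infeasibility), and both inherit the paper's mild imprecision about the boundary case hidden in the $(1+\epsilon)$ phrasing of Theorem \ref{Capacity_Thm}, so no additional gap is introduced on your side.
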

\begin{proof}
See Appendix \ref{App_Capacity_Thm_Proof}
\end{proof}

Theorem \ref{Capacity_Thm} says that $\sR$ is in fact the system's capacity region. This answers the first question. Moreover, the second question is answered in the proof, as shown in Appendix \ref{App_Capacity_Thm_Proof}. In the proof, we show that with a simple modification to Algorithm \ref{Scheduling_Alg_Cont} we can achieve this capacity region. The modification is by setting $r_i(k)=a_i(k)$ for all $i\in\sNNR$.

\section{Simulation Results}
\label{Results}
We simulate the system for the on-off channel model as well as the continuous channel model. For both models, we assume that all channels are statistically homogeneous, i.e. $\bgamma_i=\bgamma$ for all $i\in\sN$ where $\bgamma$ is a fixed constant. Moreover, all RT users have homogeneous delivery ratio requirements, thus $q_i=q$ for all $i\in\sNR$ for some parameter $q$. All parameter values used in the simulations are: $L=1$ bits, $\Pmax=20$ and $\bgamma_i=1$.

We compare the throughput of the RT users, which is the objective of problem \eqref{Prob_DL}, to that of a simple power allocation and scheduling algorithm that we call ``FixedP'' algorithm. In the FixedP algorithm, all scheduled users transmit with the maximum power, i.e. $\PRik=P_{\rm max}$ for all $i\in\sN$ and all $k\geq1$, while the scheduling policy is to flip a biased coin and choose to schedule either the NRT users or the RT users. The coin is set to schedule the RT users with probability $q$ (the delivery ratio requirement for all users), at which case the RT users are sorted according to $Y_i(k)$ and scheduled one by one until the current slot ends. On the other hand, when the coin chooses the NRT users, the FixedP policy assigns the entire time slot to the NRT user with the longest queue.


\subsection{On-Off Channel Model}
We assume that we have $N=20$ users that is split equally between the RT and NRT users, i.e. $\NR=\NNR=20$. Fig. \ref{Algorithm1_vs_FixedP_P2_10} shows a substantial increase in the average rate of the proposed algorithm over the FixedP algorithm with over $200\%$ at low $\Pavg$ values and $60\%$ at high $\Pavg$ values. We simulated the system with $\Bmax=10^4$, $\Ts=1$ and $q=0.3$.

\begin{figure}%
\centering
\includegraphics[width=1.1\columnwidth]{\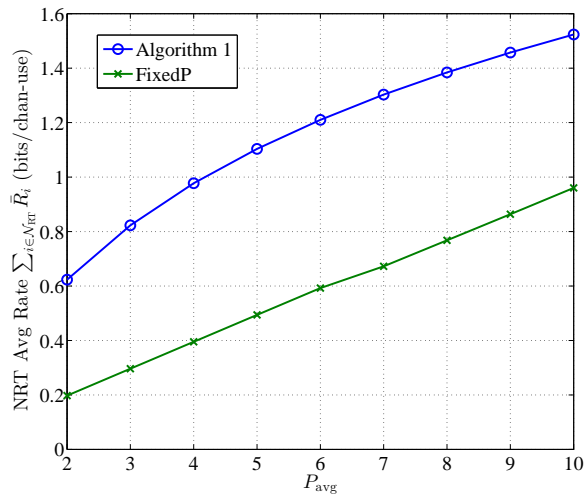}%
\caption{Sum of average throughput for all NRT users. The FixedP algorithm assigns a fixed power to all users set at $\Pmax$.}%
\label{Algorithm1_vs_FixedP_P2_10}%
\end{figure}

In Fig. \ref{Algorithm1_vs_FixedP_Qp1_8}, the sum of average NRT users' throughput is plotted while keeping $\Pavg=10$ but changing $q$. We can see that the FixedP algorithm results in a large degradation in the throughput compared to Algorithm \ref{Scheduling_Alg} which allocates the power and schedules the users optimally with respect to \eqref{Prob_DL}. The decrease in the throughput observed in both curves of Fig. \ref{Algorithm1_vs_FixedP_Qp1_8} is due to the increase in the parameter $q$. This increase makes constraint \eqref{RT_QoS} more stringent and thus decreases the feasible region decreasing the throughput.

\begin{figure}%
\centering
\includegraphics[width=1.1\columnwidth]{\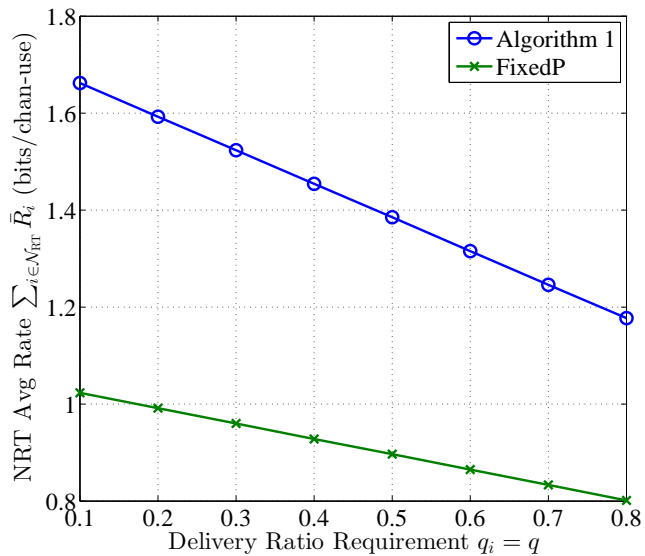}%
\caption{As $q$ increases, the RT users are assigned the channel more frequently. This comes at the expense of the NRT's throughput. However, the proposed algorithm outperforms the FixedP algorithm.}%
\label{Algorithm1_vs_FixedP_Qp1_8}
\end{figure}

In Fig. \ref{Algorithm1_vs_FixedP_N13_20} we show the effect of increasing the number of users on the system's throughput. As the number of users increase, more RT users have to be scheduled. This comes at the expense of the time allocated to the NRT users thus decreasing the throughput for the two plotted algorithms.
\begin{figure}%
\centering
\includegraphics[width=1\columnwidth]{\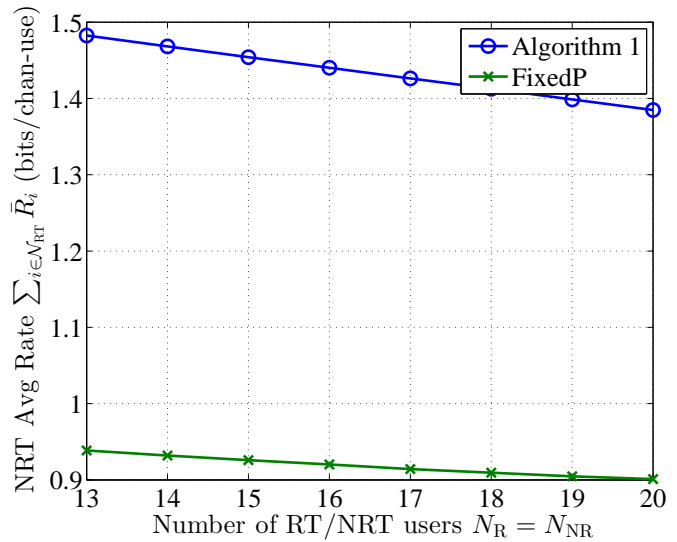}%
\caption{As $N$ increases, the RT users are allocated the channel more at the expense of the NRT users' throughput.}%
\label{Algorithm1_vs_FixedP_N13_20}%
\end{figure}

\subsection{Continuous Channel Fading Model}
In this simulation setup, we assume the channels are fading according to a Rayleigh fading model with avg power gain of $\bgamma=1$. 
%
In Fig. \ref{Cont_Opt_vs_LambStrict_Complexity}, we plot the complexity of the Lambert-Strict algorithm as well as the exhaustive search algorithm with exponential complexity versus the number of users $\NR$. The complexity is measured in terms of the average number of iterations, per-slot, where we have to evaluate the objective function of \eqref{Max_Prob_RT}. 
Since this complexity changes from a slot to the other, we plot the average of this complexity. As the number of users increases, the Lambert-Strict algorithm has an average complexity close to linear. However, the there is no sacrifice in the throughput of the NRT users. This is shown in Fig. \ref{Cont_Opt_vs_LambStrict_Throughput}. The reason stems from the optimality of the Lambert-Strict algorithm that does not eliminate any RT users from scheduling unless it is a suboptimal user. We note that we simulated this system with $\Bmax=100$, $\Ts=5$, $\Pavg=10$ and $q=0.9$.
\begin{figure}%
\centering
\includegraphics[width=1\columnwidth]{\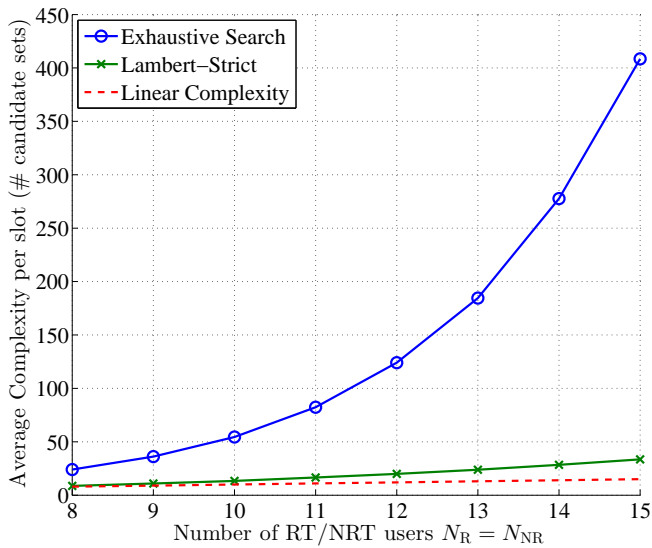}%
\caption{As the number of NRT users in the system increase the complexity increases exponentially for exhaustive search and nearly linear for the Lambert-Strict algorithm.}%
\label{Cont_Opt_vs_LambStrict_Complexity}%
\end{figure}

\begin{figure}%
\centering
\includegraphics[width=1.05\columnwidth]{\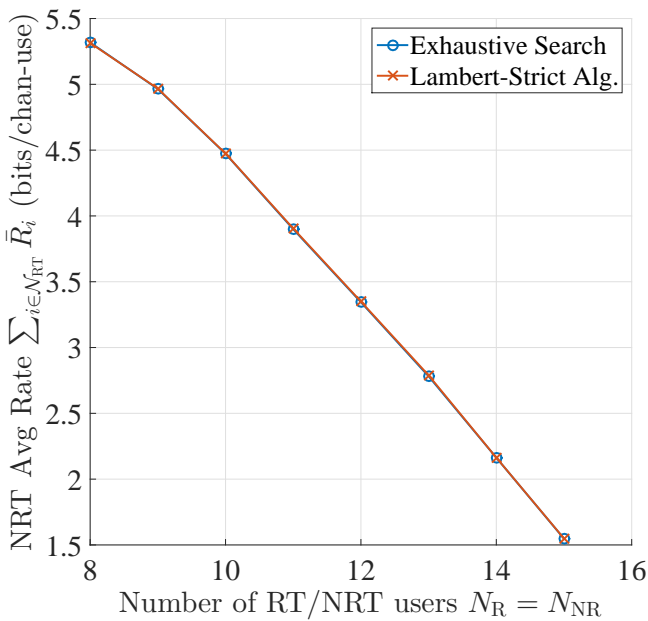}%
\caption{The Lambert-Strict Algorithm yields the same throughput as the exhaustive search algorithm but with a lower average complexity.}%
\label{Cont_Opt_vs_LambStrict_Throughput}%
\end{figure}

\section{Conclusions}
\label{Conclusion}
We discussed the problem of throughput maximization in downlink cellular systems in the presence of RT and NRT users. We formulated the problem as a joint power-allocation-and-scheduling problem. Using the Lyapunov optimization theory, we presented two algorithms to optimally solve the throughput maximization problem. The first algorithm is for the on-off channel fading model while the second is for the continuous channel fading model. The power allocations for both algorithms are in closed-form expressions for the RT as well as the NRT users. We showed that the NRT power allocation is water-filling-like which is monotonically increasing in the channel gain. On the other hand, the RT power allocation has a totally different structure that we call the ``Lambert Power Allocation''. It is found that the latter is a decreasing function in the channel gain.

The two algorithms differ in the complexity of the adopted scheduling policies. The first algorithm has a linear complexity while the second is shown, through simulations, to have a close-to-linear complexity. We presented the capacity region of the problem and showed that the proposed algorithms achieve this region.

\appendices

\section{Proof of Theorem \ref{Optimality_Thm}}
\label{App_Optimality_Thm_Proof}
\begin{proof}
We divide the proof into two parts. First, we show that the virtual queues are mean rate stable. This proves that constraints \eqref{RT_QoS} and \eqref{P_avg} are satisfied. Second, through the Lyapunov optimization technique we show that the drift-minus-reward term is within a constant gap from the performance of the optimal, genie-aided algorithm \cite{georgiadis2006resource,urgaonkar2011optimal}.

\subsubsection{Mean Rate Stability} 
According to \eqref{Max_Prob}, Algorithm \ref{Scheduling_Alg} minimizes $\Psi(k)$ where the minimization is taken over all possible scheduling and power allocation algorithms including the optimal algorithm that solves \eqref{Prob_DL}. We define $\Psi^*(k)\triangleq\min\Psi(k)$. Thus we can write $\Psi^*(k)\leq \Psiopt$ 
where $\Psiopt$ is the value of $\Psi(k)$ evaluated at the optimal algorithm and is given by
\begin{multline}
\Psiopt\triangleq\sum_{i\in\sNR}\EEU{\Yik\lb \lambda_i q_i-{\Oneopt}\rb}+\\
X(k)\lb\sum_{i\in\sN}\frac{\EEU{\muRopt\PRopt}}{\Ts}-\Pavg\rb\\
+\sum_{i\in\sNNR}\Qik\lb \EEU{L\bRopt-\muRopt\Ropt}\rb,
\label{Psiopt}
\end{multline}
where $\PRopt$, $\muRopt$, $\Oneopt$ and $\Ropt$ satisfy \eqref{RT_QoS_Opt}, \eqref{P_avg_Opt} and \eqref{NRT_QoS_Opt}. Taking $\EE{\cdot}$ to \eqref{Psiopt}, summing over $k=0\cdots K-1$, dividing by $K$, taking the limit as $K\rightarrow\infty$ and using \eqref{RT_QoS_Opt}, \eqref{P_avg_Opt} and \eqref{NRT_QoS_Opt} gives
\begin{equation}
\limsup_{K\rightarrow\infty}\frac{1}{K}\sum_{k=0}^{K-1}\EE{\Psiopt}\leq 0
\label{Psi_Neg}
\end{equation}
Evaluating by Algorithm \ref{Scheduling_Alg} in the right-side of \eqref{Drift_Bound}, and taking $\EE{\cdot}$ with respect to $\bfU(k)$ to both sides gives
\begin{multline}
\frac{1}{2}\sum_{i\in\sNR}{\EE{Y_i^2(k)}}+\frac{1}{2}\sum_{i\in\sNNR}{\EE{Q_i^2(k)}}+\frac{1}{2}\EE{X^2(k)}\leq\\ C_1+\EE{\Psi^*(k)}.
\label{Drift_bound_Alg_1}
\end{multline}
Removing the two summations on the left-side of \eqref{Drift_bound_Alg_1}, summing over $k=0\cdots K-1$, dividing by $K$ then taking the limit as $K\rightarrow\infty$ yields
\begin{multline}
\limsup_{K\rightarrow\infty}\frac{\EE{X^2(K)}}{2K}\leq C_1+\lim_{K\rightarrow\infty}\frac{1}{2K}\sum_{k=0}^{K-1}\EE{\Psi^*(k)}\\
\overset{(a)}{\leq} C_1+\lim_{K\rightarrow\infty}\frac{1}{2K}\sum_{k=0}^{K-1}\EE{\Psiopt}\overset{(b)}{\leq}C_1.
\label{Drift_Constant}
\end{multline}
where inequalities (a) and (b) in \eqref{Drift_Constant} follow from the inequality $\Psi^*(k)\leq \Psiopt$ and \eqref{Psi_Neg}, respectively. Jensen's inequality says that $\EE{X(K)}\leq \EE{X^2(K)}$. Dividing by $K^2$, taking the square root, passing $K\rightarrow\infty$ and using \eqref{Drift_Constant} completes the mean rate stability proof. Similarly we can show the mean rate stability of $\Yik$.

\subsubsection{Objective Function Optimality} Evaluating the right-hand-side of \eqref{Drift_minus_Reward_Bound} at the optimal policy that has a genie-aided knowledge of the optimum reward $r_i(k)=\bRopt$ we get $\Delta(k)-\Bmax\sum_{i\in\sNNR}\EEU{Lr_i(k)}\leq C_1+\Psiopt-\Bmax\sum_{i\in\sNNR}\bRopt$ which is similar to equation (20) in \cite{li2011delay}. The optimality proof continues along the lines of Theorem 2 in \cite{li2011delay}.
\end{proof}

\section{Proof of Theorem \ref{Capacity_Thm}}
\label{App_Capacity_Thm_Proof}
\begin{proof}
We divide our proof into two parts. In the first part (Achievability), we show that if $\lambdaNR$ is strictly within the region $\sR$, then the queues can be stabilized. And the algorithm that stabilizes these queues is a modified version of Algorithm \ref{Scheduling_Alg_Cont}. We show this using the Lyapunov optimization technique \cite[pp.120]{srikant2013communication}. In the second part (Converse), we show that if $\lambdaNR\notin\sR$, then there exists no algorithm that guarantees the stability of the NRT queues.

\subsubsection{Achievability} We will show here that the following inequality holds under Algorithm \ref{Scheduling_Alg_Cont} which is the key to the proof.
\begin{multline}
\sum_{i\in\sNNR}\lambda_i\Qik + \sum_{i\in\sNR}\lambda_i q_i\Yik-\sum_{i\in\sN}X(k)\Pavg\leq \\
\E_{\bfU(k)}\left[\sum_{i\in\sNNR}\Qik\Dik+\sum_{i\in\sNR}\Yik\Dik\right]\\
-\E_{\bfU(k)}\left[\sum_{i\in\sN} \frac{X(k)\muRik\PRik}{\Ts}\right],
\label{Key_To_Proof}
\end{multline}
where $\Bik\triangleq\muRik\log\lb1+\PRik\gammaik\rb$. Once this inequality is proven, the rest of the achievability proof works similar to Theorem 5.3.2 in \cite[pp.120]{srikant2013communication}. Since $\lambdaNR (1+\epsilon)\in\sR$, to prove \eqref{Key_To_Proof} we multiply \eqref{NRT_Arr_Rate} by $\lambda_i$, \eqref{RT_Arr_Rate} by $\lambda_i$, and \eqref{P_avg_Capacity} by $(-\Pavg)$, then add the three inequalities after summing the first over $i\in\sNNR$ and the second over $i\in\sNR$ yielding
\begin{align}
\nonumber\sum_{i\in\sNNR}&\lambda_i\Qik + \sum_{i\in\sNR}\lambda_i q_i\Yik-\sum_{i\in\sN}X(k)\Pavg\leq \\
&\sum_{\bfm\in\sM^N}\lb\sum_{i\in\sNNR}\Qik\Dikm+\sum_{i\in\sNR}\Yik\Dikm\right.\\
&\left.-\sum_{i\in\sN} \frac{X(k)\muRikm\PRikm}{\Ts}\rb\pim
\label{Mult_and_Sum}\\
&\leq\sum_{\bfm\in\sM^N}\left[\sum_{i\in\sNNR}\PsiNRikst+\sum_{i\in\sNR}\PsiRikst\right]\pim,
\label{Mult_and_Sum_Bounded}
\end{align}
where $\Dikm\triangleq\muRik\log\lb1+\PRik\gammaik\rb$ while inequality \eqref{Mult_and_Sum} follows since the objective of problem \eqref{Max_Prob} is an upper bound on \eqref{Mult_and_Sum}. But since the right-hand-side of \eqref{Key_To_Proof} can be manipulated to give
\begin{align}
&\E_{\bfU(k)}\left[\sum_{i\in\sNNR}\Qik\Dikm+\sum_{i\in\sNR}\Yik\Dikm\right.\\
&\left.-\sum_{i\in\sN} \frac{X(k)\muRik\PRik}{\Ts}\right]\\
=&\E_{\bfU(k)}\left[\underset{i\in\sNNR}{\sum}\lb\Qik\Dikm-\frac{X(k)\muRik\PRik}{\Ts}\rb\right.\\
&\left.+\underset{i\in\sNR}{\sum}\lb\Yik\Dikm-\frac{X(k)\muRik\PRik}{\Ts}\rb\right]
\label{RHS_Manipulated}\\
=&\sum_{\bfm\in\sM^N}\left[\sum_{i\in\sNNR}\PsiNRikst+\sum_{i\in\sNR}\PsiRikst\right]\pim\geq\\
&\sum_{i\in\sNNR}\lambda_i\Qik +\sum_{i\in\sNR}\lambda_i q_i\Yik-\sum_{i\in\sN}X(k)\Pavg
\label{RHS_Compared}
\end{align}
where the left side of the inequality in \eqref{RHS_Compared} follows by evaluating \eqref{RHS_Manipulated} at Algorithm \ref{Scheduling_Alg_Cont} while its right side follows from \eqref{Mult_and_Sum_Bounded} which completes the proof of \eqref{Key_To_Proof}.

\subsubsection{Converse} The converse is done by showing that the upper bound of the sum of the number bits served from all NRT buffers under the best, possibly genie-aided, policy is less than the sum of bits arriving to the NRT buffers if the arrival rate does not satisfy \eqref{RT_Arr_Rate} through \eqref{PRikm_Positive}.\\
From the strict separation theorem \cite[pp.10]{srikant2013communication}, if $\lambda\notin\sR$ then there exists a vector $\beta\triangleq[\beta_1,\cdots\beta_{\NNR}]^{\rm T}\in\script{R}^{\NNR}$ and a constant $\delta>0$ such that for any vector $x\in\sR$ the following holds
\begin{equation}
\sum_{i\in\sNNR}\beta_i\lambda_i\geq\sum_{i\in\sNNR}\beta_ix_i+\delta
\label{Strict_Separation}
\end{equation}
Define $H(k+1)=H(k)+\sum_{i\in\sNNR}\beta_i\lb L\aik-\Bik\rb$ as the weighted sum of the queues where $\Bik\triangleq \muRik\Rik$ is the number of bits transmitted to user $i$ at slot $k$. Hence we have
\begin{equation}
H(K)=\sum_{k=0}^{K-1}\sum_{i\in\sNNR}\beta_i\lb L\aik-\Bik\rb.
\label{H_K}
\end{equation}
Define the set $\sK\triangleq\{k:m(k)=l, 0\leq k<K\}$ we can bound the second term in \eqref{H_K} as follows
\begin{align}
\nonumber \sum_{i\in\sNNR}&\beta_i\limsup_{K\rightarrow\infty}\sum_{k=0}^{K-1}\frac{\Bik}{K}\\
&\leq \sum_{i\in\sNNR}\beta_i\limsup_{K\rightarrow\infty}\sum_{l=1}^M\sum_{k\in\sK}\frac{\Bikst}{\vert\sK\vert}\frac{\vert\sK\vert}{K}
\label{Bounding_Bits}\\
&=\sum_{i\in\sNNR}\beta_i\sum_{l=1}^M\Bistl\pi_l=\sum_{i\in\sNNR}\beta_i\sum_{l=1}^M Lx_i\pi_l.
\label{increasing}
\end{align}
Adding $L\delta$ to both sides of \eqref{increasing} and using \eqref{Strict_Separation} yields
\begin{align}
\nonumber \sum_{i\in\sNNR}&\beta_i\limsup_{K\rightarrow\infty}\sum_{k=0}^{K-1}\frac{\Bik}{K}+L\delta \leq \\
\nonumber &L\lb\sum_{l=1}^M \pi_l\sum_{i\in\sNNR}\beta_ix_i+\delta\rb \leq  \sum_{i\in\sNNR}\beta_iL\lambda_i\\
&=\lim_{K\rightarrow\infty}\sum_{k=0}^{K-1}\frac{L\aik}{K}.
\label{Service_bounded_by_Arrival}
\end{align}
Combining \eqref{Service_bounded_by_Arrival} and \eqref{H_K} we conclude that $\limsup_{K\rightarrow\infty}H(K)=\infty$ which means that the weighted sum of the queues is unbounded, under the best possible policy, when $\lambdaNR\notin\sR$ which completes the proof.
\end{proof}



\bibliographystyle{IEEEbib}
\bibliography{MyLib}

\end{document}